\newcommand{\qedsymb}{\hfill{\rule{2mm}{2mm}}}
\newenvironment{proof}[1][]{\begin{trivlist}
\item[\hspace{\labelsep}{\bf\noindent Proof#1:\/}] }{\qedsymb\end{trivlist}}
\mathchardef\ordinarycolon\mathcode`\:     
\def\vcentcolon{\mathrel{\mathop\ordinarycolon}} \begingroup
\active \lowercase{\endgroup \let :\vcentcolon }
\newtheorem{theorem}{Theorem}[section]
\newtheorem{definition}[theorem]{Definition}
\newtheorem{claim}[theorem]{Claim}
\newtheorem{lemma}[theorem]{Lemma}
\newtheorem{fact}[theorem]{Fact}
\newcommand{\smfrac}[2]{\mbox{$\frac{#1}{#2}$}}
\newcommand{\ket}[1]{|#1\rangle}
\newcommand{\bra}[1]{\langle#1|}
\newcommand{\ra}{\rangle}
\newcommand{\ketbra}[2]{|#1\rangle\langle#2|}
\newcommand{\braket}[2]{\langle {#1} | {#2} \rangle}
\newcommand{\QMA}{{\sf{QMA}}}
\newcommand{\BPP}{{\sf{BPP}}}
\newcommand{\BQP}{{\sf{BQP}}}
\newcommand{\Pclass}{{\sf{P}}}
\newcommand{\NP}{{\sf{NP}}}
\newcommand{\comclass}{{\sf{C}}}
\newcommand{\QCMA}{{\sf{QCMA}}}
\def\final{{\mathrm{final}}}
\def\init{{\mathrm{init}}}
\newcommand{\calK}{{\cal{K}}}
\def\mns{{\mbox{-}}}
\newcommand{\kdlocal}[2]{${#2}$-state ${#1}$-local}
\newcommand{\kdLOCAL}[2]{${#2}$-STATE ${#1}$-LOCAL HAMILTONIAN}
\newcommand{\rDHam}[2]{${#1}$-dim ${#2}$-state}
\newcommand{\rDHAM}[2]{${#1}$-DIM ${#2}$-STATE HAMILTONIAN}
\newcommand{\rDqubit}[1]{${#1}$-dim}
\newcommand{\rDQUBIT}[1]{${#1}$-DIM HAMILTONIAN}
\newcommand{\mysymbol}[1]{{\mbox{\raisebox{-0.3em}{\epsfysize=1.2em\epsfbox{#1}}}}}
\newcommand{\gateflag}{\mysymbol{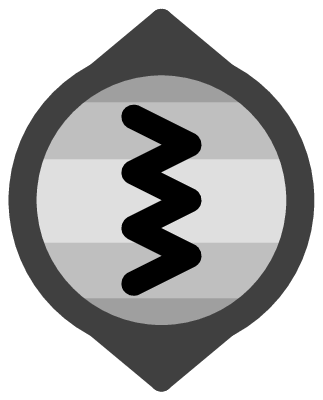}}
\newcommand{\start}{\mysymbol{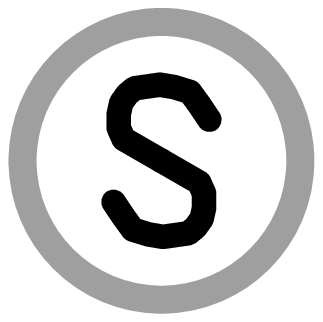}}
\newcommand{\rflag}{\mysymbol{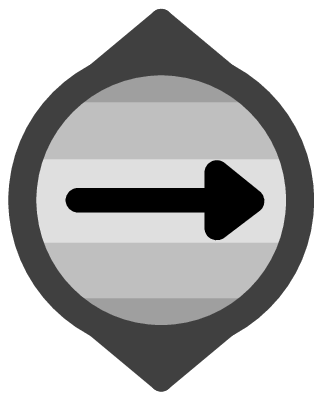}}
\newcommand{\lflag}{\mysymbol{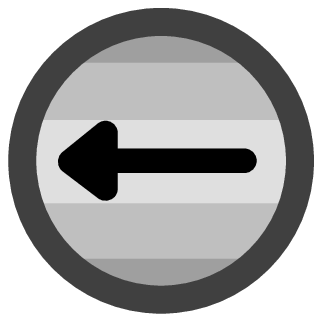}}
\newcommand{\turn}{\mysymbol{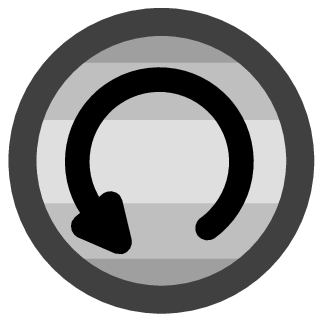}}
\newcommand{\unborn}{\mysymbol{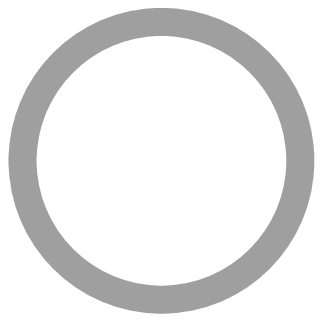}}
\newcommand{\dead}{\mysymbol{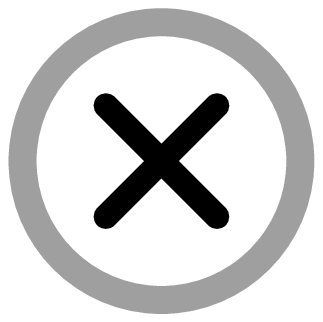}}
\newcommand{\waitl}{\mysymbol{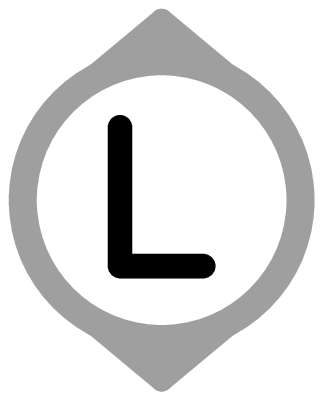}}
\newcommand{\waitr}{\mysymbol{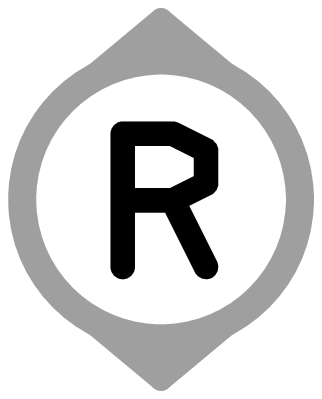}}
\newcommand{\block}{\mysymbol{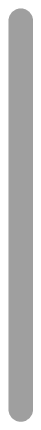}}
\newcommand{\cdotiki}{\raisebox{0.18em}{\hskip0.16em $\cdots$\hskip-0.1em }}
\newcommand{\prop}{{\mathrm{prop}}}
\newcommand{\rmpenalty}{{\mathrm{penalty}}}
\newcommand{\Hinit}{H_\init}
\newcommand{\Hfinal}{H_\final}
\newcommand{\Hprop}{H_\prop}
\newcommand{\Hpenalty}{H_\rmpenalty}
\newcommand{\Cmod}[1]{\tilde{C}_{#1}}
\newcommand{\focsv}[1]{{}}
\title{The power of quantum systems on a line}
\author{
Dorit Aharonov\thanks{E-mail: doria@cs.huji.ac.il.  Supported by
Israel Science Foundation grant number 039-7549, Binational Science
Foundation grant number 037-8404, and US Army Research Office grant
number 030-7790.} \\ School of Computer Science and Engineering,\\
Hebrew University, Jerusalem,
 Israel
\and Daniel Gottesman\thanks{E-mail:
dgottesman@perimeterinstitute.ca.  Supported by CIFAR, by the
Government of Canada through NSERC, and by the Province of Ontario
through MRI.}\\Perimeter Institute\\Waterloo, Canada
\and Sandy Irani\thanks{E-mail: irani@ics.uci.edu.  Partially
supported by NSF Grant CCR-0514082.}\\Computer Science
Department\\University of California, Irvine, USA
\and Julia Kempe\thanks{E-mail: kempe@cs.tau.ac.il. This work was
mainly done while the author was at CNRS and LRI, University of
Paris-Sud, Orsay, France. Partially supported by the European
Commission under the Integrated Project Qubit Applications (QAP)
funded by the IST directorate as Contract Number 015848, by an ANR
AlgoQP grant of the French Research Ministry, by an Alon Fellowship
of the Israeli Higher Council of Academic Research by an
Individual Research grant of the ISF, and by a European Research Council (ERC) Starting Grant.}\\School of Computer
Science,\\ Tel Aviv University, Israel }
\begin{document}

\maketitle

\begin{abstract}
We study the computational strength of quantum particles (each of
finite dimensionality) arranged on a line.  First, we prove that it
is possible to perform universal adiabatic quantum computation using
a one-dimensional quantum system (with $9$ states per particle).
This might have practical implications for experimentalists
interested in constructing an adiabatic quantum computer.  Building
on the same construction, but with some additional technical effort
and $12$ states per particle, we show that the problem of
approximating the ground state energy of a system composed of a line
of quantum particles is $\QMA$-complete; $\QMA$ is a quantum
analogue of $\NP$. This is in striking contrast to the fact that the
analogous classical problem, namely, one-dimensional MAX-$2$-SAT
with nearest neighbor constraints, is in $\Pclass$. The proof of the
$\QMA$-completeness result requires an additional idea beyond the
usual techniques in the area: Not all illegal configurations can be
ruled out by local checks, so instead we rule out such illegal
configurations because they would, in the future, evolve into a
state which can be seen locally to be illegal.  Our construction
implies (assuming the quantum Church-Turing thesis and that quantum computers cannot efficiently
solve $\QMA$-complete problems) that there are one-dimensional systems which take an
exponential time to relax to their ground states at any temperature,
making them candidates for being one-dimensional spin glasses.
\end{abstract}


\section{Introduction}

The behavior of classical or quantum spin systems frequently depends very
heavily on the number of spatial dimensions available. In
particular, there is often a striking difference between the
behavior of one-dimensional systems and of otherwise similar two- or
higher-dimensional systems. For instance, one-dimensional systems
generally do not experience phase transitions except at zero
temperature, whereas phase transitions are common in other
dimensions.  As another example, satisfiability with nearest
neighbor constraints between constant-size variables set on a two- or larger-dimensional grid is a
hard problem ($\NP$-complete, in fact), whereas for a
one-dimensional line, it can be solved in polynomial time.

We thus ask: what is the computational power
associated with a line of quantum
particles? There are a number of ways to interpret the question. For
instance, we can ask whether the evolution of such systems can be
efficiently simulated by a classical computer. We can ask whether
the system is powerful enough to create a universal quantum computer
under various scenarios for how we may control the system --- in
which case we cannot hope to simulate this behavior efficiently on a
classical computer, unless of course $\BQP=\BPP$. ($\BPP$ and $\BQP$
are the classes of problems efficiently solvable on a probabilistic
classical computer and on a quantum computer, respectively.) We can
also ask how difficult it is to calculate or approximate interesting
properties of the system, such as its ground state energy (that is,
the lowest eigenvalue of the Hamiltonian operator for the system).

For many one-dimensional quantum systems, it {\em is} indeed
possible to classically compute many properties of the system,
including in some cases the dynamical behavior of the system. The
method of the density matrix renormalization group
(DMRG)~\cite{White92,White93,DMRG-overview} has been particularly
successful here, although there is no complete characterization of
which one-dimensional systems it will work on. Indeed, DMRG provides
a good example of the difference between one- and two-dimensional
systems; there are only a few results applying DMRG techniques to
simulate special two-dimensional systems.

However, it has long been known that one-dimensional quantum systems
can also, under the right circumstances, perform universal quantum
computation.  It is straightforward to create a quantum computer
from a line of qubits with fully controllable nearest-neighbor
couplings by using SWAP gates to move the qubits around.  Even a
one-dimensional quantum cellular automaton can perform universal
quantum computation~\cite{watrous,werner:automaton}; the smallest
known construction has $12$ states per cell. While many
one-dimensional systems are relatively simple and can be simulated
classically, the general one-dimensional quantum system must thus
have complexities that are inaccessible classically.

In two interesting (and closely related) subfields of the area of
quantum computation, studied extensively over the past few years, it
has been conjectured that one-dimensional systems are not as
powerful as systems of higher dimensionality.  These cases are
adiabatic evolution, and the $\QMA$-completeness of the local
Hamiltonian problem.

%

\subsection{Results related to adiabatic computation}
In an
adiabatic quantum computer, the Hamiltonian of the system is slowly
shifted from a simple Hamiltonian, whose ground state is easy to
calculate and to create, to a more complicated Hamiltonian, whose
ground state encodes the solution to some computational problem. The
quantum adiabatic theorem guarantees that if the Hamiltonian is
changed slowly enough, the system stays close to its ground state;
the time required to safely move from the initial Hamiltonian to the
final Hamiltonian is polynomial in the minimal spectral gap between
the ground state and the first excited state over the course of the
computation.

The first adiabatic quantum algorithm was introduced by Farhi {\it
et al.}~\cite{farhiad,farhi}, though the
idea of encoding the solution to an optimization
problem in the ground state of a Hamiltonian appeared
as early as 1988 \cite{anneal, stochastic}.
The computational power of the
model was clarified in~\cite{ad1}, where it was shown that adiabatic
quantum computers could in fact perform any quantum computation.
Adiabatic computation can thus be viewed as an alternative model to
quantum circuits or quantum Turing machines.

Adiabatic quantum
computers might be more robust against certain kinds of
noise~\cite{preskill}, and the slow change of parameters used in an
adiabatic quantum computer might be more amenable to some kinds of
experimental implementation.  To this end, it is important to
understand what sorts of physical systems can be used to build an
adiabatic quantum computer, and in particular, we would like the
interactions of the physical Hamiltonian to be as simple as
possible. Several research groups have devoted effort to this
question recently. To describe their work, let us make the following
definition:
\begin{definition}[Local Hamiltonian]
Let $H$ be a Hermitian operator (interpreted as a Hamiltonian,
giving the energy of some system). We say that $H$ is an $r$-state
Hamiltonion if it acts on $r$-state particles. When $r=2$, namely, when
the particles are qubits, we often omit mention of the number of
states per particle. We say that $H$ is $k$-local
if it can be written as $H = \sum_i H_i$, where each $H_i$ acts
non-trivially on at most $k$ particles. Note that this term does
not assume anything about the physical location of the particles.
We say that $H$
is a $d$-dim Hamiltonian
if the particles are arranged on a $d$-dimensional grid and the terms
$H_i$ interact only pairs of nearest neighbor particles.
Note that the term $d$-dim implies that the
Hamiltonian is $2$-local.
\end{definition}

We will in this paper assume for simplicity that
each local term $H_i$ in the Hamiltonian is positive definite and of
norm at most 1. It is always possible to ensure this by shifting and
rescaling the Hamiltonian $H \mapsto \lambda (H+E_0)$.  Provided we
also similarly shift and rescale all other energies in the problem
(such as the spectral gap), this transformation will not at all
alter the properties of the Hamiltonian.  We will further assume, to
avoid pathologies, that there are at most polynomially many terms
$H_i$ and that the coefficients in each $H_i$ are specified to
polynomially many bits of precision.

\cite{ad1} made a first step towards a practical Hamiltonian, by
showing that a \rDHam{2}{6} Hamiltonian suffices for universal
adiabatic quantum computation.  Kempe, Kitaev and Regev
\cite{KempeKitaevRegev06}, using perturbation-theory gadgets,
improved the results and showed that qubits can be used instead of
six-dimensional particles, namely, adiabatic evolution with
$2$-local Hamiltonians is quantum universal. Their interactions,
however, were not restricted to a two-dimensional grid. Terhal and
Oliveira~\cite{Oliveira:05a} combined the two results, using other
gadgets, to show that a similar result holds with \rDHam{2}{2}
Hamiltonians.

Naturally, the next question was whether a one-dimensional adiabatic
quantum computer could be universal.  Since one-dimensional systems are fairly
simple, it was conjectured that the answer was ``no.''
However, in this paper, we prove
\begin{theorem}\label{thm:adiabatic}
Adiabatic computation with \rDHam{1}{9} Hamiltonians is universal
for quantum computation.
\end{theorem}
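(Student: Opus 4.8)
The plan is to follow the standard circuit-to-Hamiltonian paradigm, adapted so that every term is a nearest-neighbor interaction on a line of $9$-state particles, and then to invoke the adiabatic framework of \cite{ad1}. Given a quantum circuit $U = U_T \cdots U_1$ on $n$ qubits, I would first lay the $n$ logical qubits out on a line and replace the usual global clock register by a \emph{local control marker} that sweeps back and forth across the line. Each sweep applies one layer of two-qubit nearest-neighbor gates; because the control signal only ever interacts with the pair of sites it currently sits between, all resulting terms are geometrically local and $2$-local. The $9$ internal states per particle would be allocated to carry the qubit data ($2$ states) together with a handful of control flags encoding whether a site is still unborn, dead, active/gate-carrying, moving left or right, turning, or waiting, plus a boundary/separator marker, so that the position and type of the marker together play the role of Kitaev's clock.

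The heart of the construction is the propagation Hamiltonian $\Hprop = \sum_i H_i$, a sum of nearest-neighbor terms each of the form $\frac12\big(\ket{a}\bra{a} + \ket{b}\bra{b} - \ket{a}\bra{b} - \ket{b}\bra{a}\big)$ for a legal local transition $\ket{a}\to\ket{b}$ of the sweeping automaton (with the appropriate gate unitary inserted when the transition applies a gate). Restricted to the subspace spanned by configurations reachable from a legal initial configuration, I would show that these transition rules form a simple directed path of length $L = \poly(n,T)$: configuration $t$ maps to configuration $t+1$ and nothing else. Consequently $\Hprop$ restricted to this legal subspace is unitarily equivalent to the Laplacian of a path graph, whose unique ground state is the uniform superposition $\frac{1}{\sqrt{L+1}}\sum_t \ket{\gamma_t}$ (the history state encoding the entire computation) and whose spectral gap is $\Omega(1/L^2)$, i.e.\ inverse-polynomial.

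To run this adiabatically I would, following \cite{ad1}, introduce an initial Hamiltonian $\Hinit$ whose unique and efficiently preparable ground state is the $t=0$ configuration --- all data qubits in $\ket{0}$ and the control marker at its starting site --- together with penalty terms $H_\rminput$ enforcing the correct input, and then interpolate $H(s) = (1-s)\Hinit + s\,\Hfinal$ with $\Hfinal = \Hprop + H_\rminput + \Hpenalty$, where the local legality penalties $\Hpenalty$ forbid every illegal nearest-neighbor pattern that can be detected locally. One then bounds the gap of $H(s)$ away from zero along the entire path by the usual two-projector (clock/geometry) argument of \cite{ad1}, so that the adiabatic theorem produces a state $\eps$-close to the history state in time $\poly(n,T,1/\eps)$; measuring the output register then yields the circuit's output with the standard $1/\poly$ success probability, which can be amplified.

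The main obstacle is the joint design problem in the first step: packing the sweeping-automaton control structure into only $9$ states per particle while guaranteeing that the legal transition graph is \emph{exactly} a path, since any branching or merging would destroy the clean path-Laplacian spectrum and hence the $\Omega(1/L^2)$ gap. A second, related difficulty is that not every illegal configuration can be excluded by the local checks in $\Hpenalty$; for the adiabatic result it suffices to argue that the legal subspace is invariant under the propagation rules and that the configurations missed by the local penalties either lie in a distinct invariant sector decoupled from the ground state or carry energy bounded below by a constant, so that the relevant low-energy physics is confined to the legal path. This is precisely the point where the sharper ``evolve-into-an-illegal-state'' idea becomes essential for the $\QMA$-completeness strengthening, but it is not yet needed here.
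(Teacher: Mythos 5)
Your proposal follows the same skeleton as the paper's proof (sweeping control marker replacing the clock, history state, path-graph Laplacian on the legal orbit, invariant-subspace gap analysis via \cite{ad1}), but it leaves unfilled exactly the step that constitutes the content of the theorem: the explicit design of a $9$-state nearest-neighbor automaton whose legal configurations form a single non-branching path. You flag this as ``the main obstacle,'' but the obstacle \emph{is} the proof. Moreover, your state budget does not obviously close at $9$: you list a ``boundary/separator marker'' as one of the internal states, whereas the paper spends no states on block boundaries --- boundaries are fixed, known positions, so the transition rules are simply made position-dependent there. The paper's $9$ states are three two-dimensional qubit-carrying subspaces (waiting qubit, gate marker, right-moving flag) plus three one-dimensional flags (unborn/dead merged into one, left-moving flag, turn flag); the merging of left/right waiting qubits and of dead/unborn is possible precisely because the adiabatic setting does not require the local input- and structure-checking needed for $\QMA$. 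Without an explicit rule set one cannot verify Fact~\ref{fact:onefuture} (exactly one forward and one backward transition from every legal configuration), and without that the claimed path-Laplacian spectrum and $\Omega(1/L^2)$ gap are unsupported.

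Two further deviations are worth noting. First, you put $\Hpenalty$ (and input-checking terms) into the final Hamiltonian and then must argue about illegal configurations that local checks miss; this imports the hardest part of the $\QMA$ argument (the clairvoyance lemma, which only yields an $\Omega(1/K^3)$ penalty, not a constant one) into a setting where it is unnecessary. The paper's final Hamiltonian is just $\Hprop$: since the span $\calK_0$ of the legal configurations is invariant under every $H(s)$ and the initial state is chosen inside it, the spectrum outside $\calK_0$ is simply irrelevant, and no penalty terms are needed. Second, your claim that the initial Hamiltonian has a \emph{unique} ground state is not achieved by the paper with $9$ states: its $H_0 = I - \ket{\gateflag(0)}\bra{\gateflag(0)}_1$ is highly degenerate and is chosen so that, restricted to $\calK_0$, it is exactly $\mathrm{diag}(0,1,\dots,1)$, the form required by the gap analysis of \cite{ad1,Ruskai:adimprove}; a naive non-degenerate choice (penalizing every site for deviating from the initial configuration) would assign $t$-dependent energies to the $\ket{\gamma(t)}$ and invalidate that analysis. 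The paper obtains non-degeneracy only by going to $13$ states (Section~\ref{sec:nondegenerate}).
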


As mentioned, this could be important experimentally, as
one-dimensional systems are easier to build than two-dimensional
ones, and adiabatic systems might be more robust than standard ones.
Any simplification to the daunting task of building a quantum
computer is potentially useful.  However, a more systematic way of
dealing with errors will be needed before it is possible to build a
large adiabatic quantum computer. The results of \cite{preskill}
only imply adiabatic quantum computation is robust against certain
types of control errors, and it remains an interesting open question
to show that adiabatic quantum computation can be made
fault-tolerant in the face of general small errors.  An important step in
this direction is \cite{shor:adia}, which introduces some quantum
error-correcting codes in the adiabatic model.

%
It is also worth noting that the adiabatic construction given in the proof
of Theorem~\ref{thm:adiabatic} actually has a degenerate ground state,
which makes it less robust against noise than a non-degenerate adiabatic computer, although still resistant
to timing errors, for instance.  Using a \rDHam{1}{13} Hamiltonian similar to the $\QMA$-complete
\rDHam{1}{12} Hamiltonian family described below, it is possible to break the degeneracy. This construction
is outlined in Section \ref{sec:nondegenerate}.

It is natural to ask whether our result holds even when we restrict
ourselves to translationally invariant Hamiltonians.
We sketch one way to do this in
Section~\ref{sec:protocol}.
However, it does not appear to be possible to make the translational
invariant adiabatic construction nondegenerate.

%

\subsection{Results related to $\QMA$-completeness}
Theorem~\ref{thm:adiabatic} means that efficient simulation of
general one-dimensional adiabatic quantum systems is probably
impossible. One might expect that calculating, or at least
approximating, some specific property of a system, such as its
ground state energy, would be more straightforward, as this does not
require complete knowledge of the system.  This intuition is
misleading, and in fact it can be {\em harder} to find the ground
state energy than to simulate the dynamics of a system. More
concretely, it has long been known that it is $\NP$-hard to find the
ground state energy of some classical spin systems, such as a
three-dimensional Ising model. Kitaev~\cite{Kitaev:book} extended
these results to the quantum realm, by defining the quantum analogue
of $\NP$, called $\QMA$ (for ``quantum Merlin-Arthur''). $\QMA$ is
thus, roughly speaking, the class of problems that can be
efficiently checked on a quantum computer, provided we are given a ``witness'' quantum state related to the answer to the problem. $\QMA$ is believed to be
strictly larger than both $\BQP$ and $\NP$. That is, a problem which
is $\QMA$-complete can most likely not be solved efficiently with a
quantum computer, and the answer cannot even be checked efficiently
on a classical computer.

Kitaev proved that the problem of approximating the ground state
energy of a quantum system with a $5$-local Hamiltonian is complete
for this class. (The ability to solve a $\QMA$-complete problem
implies the ability to solve any problem in $\QMA$.) The exact
definition of the local Hamiltonian problem is this:

\begin{definition}[Local Hamiltonian problem]
\label{def:localHam} Let $H = \sum_i H_i$ be an \kdlocal{k}{r}
Hamiltonian. Then $(H, E, \Delta)$ is in \kdLOCAL{k}{r} if the
lowest eigenvalue $E_0$ of $H$ is less than or equal to $E$.  The
system must satisfy the promises that $\Delta = \Omega(1/{\rm
poly}(n))$ and that either $E_0 \leq E$ or $E_0 \geq E + \Delta$. We
can define a language \rDHAM{d}{r} similarly.
\end{definition}

Kitaev's proof uses what is called the circuit-to-Hamiltonian
construction of Feynman \cite{feynman}, which we sketch in Subsection~\ref{outline}.
The main result of \cite{ad1}, namely the universality of
adiabatic computation,
was based on the observation that the circuit-to-Hamiltonian construction
is useful also in the adiabatic context, where it is used
to design the final Hamiltonian of the adiabatic evolution.
It turns out that the same
connection between adiabatic universality proofs and $\QMA$-completeness
proofs can also be made for many of the follow-up papers on the two topics.
It is often the case that
whenever adiabatic universality can be proven for some class
of Hamiltonians, then the local Hamiltonian problem with the
same class (roughly) can be shown to be $\QMA$-complete --- and vice versa.

Note, however, that there is no formal implication from either of those
problems to the other. On one hand, proving $\QMA$-completeness is in general
substantially harder than achieving universal adiabatic quantum
computation: In an adiabatic quantum computer, we can choose the
initial state of the adiabatic computation to be any easily-created
state which will help us solve the problem, so we can choose to work
on any convenient subspace which is invariant under the Hamiltonian.
For $\QMA$, the states we work with are chosen adversarially from
the full Hilbert space, and we must be able to check ourselves,
using only local Hamiltonian terms, that they are of the correct
form. On the other hand, proving adiabatic universality involves
analyzing the spectral gap of all of the Hamiltonians $H(t)$ over the
duration of the computation, whereas $\QMA$-completeness proofs are only concerned with one
Hamiltonian.

Both~\cite{KempeKitaevRegev06} and~\cite{Oliveira:05a} in fact
prove $\QMA$-completeness; universal adiabatic quantum computation
follows with a little extra work. Thus, the \rDQUBIT{2} problem is
$\QMA$-complete.

However, the \rDHAM{1}{r} problem remained open for all (constant)
$r$. It was suspected that the problem was not $\QMA$-complete, and
in fact might be in $\BQP$ or even $\BPP$.  For one thing, ground
state properties of one-dimensional quantum systems are generally
considered particularly easy.  For instance, Osborne has recently
proven~\cite{osborne2} that there are efficient classical
descriptions for a class of one-dimensional quantum systems.  DMRG
techniques have been employed extensively to calculate ground state
energies and other properties of a variety of one-dimensional
quantum systems. Furthermore, the classical analogue of \rDHAM{1}{r}
is easy: Take $r$-state variables arranged on a line with
constraints restricting the values of neighboring pairs of
variables.  If we assign a constant energy penalty for violating a
constraint, the lowest energy state satisfies as many constraints as
possible.  This problem, a one-dimensional restriction of
MAX-$2$-SAT with $r$-state variables, is in fact in $\Pclass$; it
can be solved with a recursive divide-and-conquer algorithm or by
dynamic programming.

For instance, we can divide the line in half, and run through all
possible assignments of the two variables $x_i$ and $x_{i+1}$
adjacent to the division.  For each pair of values for $x_i$ and
$x_{i+1}$, we can calculate the maximal number of satisfiable
constraints by solving the sub-problems for the right and left
halves of the line.  Then we can compare across all $r^2$
assignments for the one that gives the largest number of satisfied
constraints.  Thus, to solve the problem for $n$ variables, it is
sufficient to solve $2r^2$ similar problems for $n/2$ variables.  By
repeatedly dividing in half, we can thus solve the problem in
$O((2r^2)^{\log n}) = O(n^{\log (2r^2)})$ steps.  MAX-$k$-SAT
in one dimension is also in $\Pclass$, and can in fact be reduced to
MAX-$2$-SAT with large enough variables.

Despite the intuition that one-dimensional systems should not be too
difficult, we prove:

\begin{theorem}\label{thm:qma}
\rDHAM{1}{12}\ is $\QMA$-complete.
\end{theorem}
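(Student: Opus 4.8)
The plan is to prove both directions of $\QMA$-completeness. Containment in $\QMA$ is the easy direction and essentially immediate: the \rDHAM{1}{12} problem is a special case of the general local Hamiltonian problem, which Kitaev~\cite{Kitaev:book} showed lies in $\QMA$. The prover supplies the purported ground state as a witness, and the verifier estimates its energy to precision $\Delta/2$ by the standard verification procedure (for instance, sampling a random local term $H_i$, each of bounded norm, and measuring it); the promise gap $\Delta = \Omega(1/\poly(n))$ then separates the two cases. So the substance of the theorem is $\QMA$-hardness, which I would establish by a circuit-to-Hamiltonian reduction tailored to a line.

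For hardness, first I would take an arbitrary $\QMA$ verification circuit $U = U_T \cdots U_1$ and encode its execution as the motion of a computational ``head'' sweeping back and forth along a line of $12$-state particles, reusing the one-dimensional machinery developed for the adiabatic construction of Theorem~\ref{thm:adiabatic} (which needs only $9$ states) and enlarging the alphabet to carry the extra bookkeeping required in the adversarial setting. Each particle simultaneously encodes a data qubit and control flags marking the head's position and direction, so that clock and data are fused into a single register --- there is no room in one dimension for a separate clock. The Hamiltonian takes the standard form
\begin{equation}
H = \Hinit + \Hfinal + \Hprop + \Hpenalty,
\end{equation}
where $\Hprop$ is a sum of nearest-neighbor terms enforcing that each legal configuration is mapped to the next legal configuration under the transition rules, $\Hinit$ penalizes incorrectly initialized ancillas at time zero, $\Hfinal$ penalizes a non-accepting output, and $\Hpenalty$ penalizes locally-detectable illegal configurations.

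Next I would analyze $H$ restricted to the subspace spanned by legal configurations, where the argument is the by-now standard one. The reachable configurations form a single path under the transition rules, so $\Hprop$ restricted to this subspace is, after a change of basis, the Laplacian of a path of length $T$, whose unique zero-energy state is the uniform superposition over the computational history. Adding $\Hinit$ and $\Hfinal$ and invoking Kitaev's geometric lemma together with the $\Omega(1/T^2)$ spectral gap of the path, one obtains the desired separation: if some witness makes $U$ accept with high probability the ground energy is exponentially small, while if every witness is rejected the ground energy is at least $\Omega(1/T^3)$, comfortably above the promise threshold after rescaling.

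The hard part, and the genuinely new ingredient, is controlling the illegal configurations --- those not reachable from any valid initial state --- since in one dimension with only local checks some of these cannot be caught by any single nearest-neighbor term. The idea is to argue that although such a configuration may look locally legal everywhere, it is not invariant under the transition rules: applying the forward evolution of $\Hprop$ a bounded number of times drives it into a configuration at which some local check of $\Hpenalty$ does fire. Formally, I would use that the forward and backward rules act as a partial permutation on configurations, so the illegal subspace decomposes into disjoint paths; on each path $\Hprop$ is again a Laplacian, and I would prove that every such path contains at least one configuration flagged by $\Hpenalty$. A standard lemma on the path Laplacian perturbed by a projector at one of its vertices then bounds the ground energy on each illegal path below by $\Omega(1/T^3)$, so no low-energy state has any illegal component. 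The main obstacle is the combinatorial case analysis establishing this ``future-detectability'' property: one must verify, for every way the control flags can be locally consistent yet globally corrupt (for instance two heads, or a head moving the wrong way), that the rules inevitably produce a locally visible collision or contradiction downstream. Designing the $12$-state alphabet and the transition rules so that this holds for all such corruptions is precisely where the extra states beyond the adiabatic construction are spent.
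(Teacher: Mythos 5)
Your proposal is correct and follows essentially the same route as the paper: containment is the standard energy-estimation argument, hardness uses the sweeping-head construction on a line of $12$-state particles with $H = \Hprop + \Hinit + \Hfinal + \Hpenalty$, the legal subspace is handled by Kitaev's geometric lemma, and your ``future-detectability'' argument is precisely the paper's Clairvoyance Lemma, which partitions the configurations into minimal sets invariant under the transition rules and shows that each illegal set either is entirely locally detectable or contains a polynomially large fraction of locally detectable configurations, yielding the $\Omega(1/K^3)$ bound via the same geometric lemma. The only caveat is that invariant sets containing several active sites need not form paths under the transition rules, but in those sets every configuration is directly penalized by $\Hpenalty$, so the path structure is only needed (and holds, by Fact~\ref{fact:onefuture}) for the single-active-site sets on which your argument actually relies.
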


The theorem implies a striking qualitative difference between the quantum
and the classical one-dimensional versions of the same problem ---
one is easy, whereas the other is complete for a class which seems
to be strictly larger than $\NP$. This might seem surprising, but in
retrospect, we can provide an intuitive explanation. The reason is
that the $k$-local Hamiltonian essentially allows us to encode an
extra dimension, namely, time, by making the ground state a
superposition of states corresponding to different times. In other
words, the result implies that the correct analogue of
one-dimensional local Hamiltonian is {\it two}-dimensional
MAX-$k$-SAT, which is of course $\NP$-complete.  Indeed,
there are many cases in physics where one-dimensional quantum
systems turn out to be most closely analogous to two-dimensional
classical systems. One such example is given by Suzuki who showed that
the the $d$-dimensional  Ising model with a transverse field is
equivalent to the $(d+1)$-dimensional Ising model at finite temperatures
\cite{suzuki}, although the extra dimension there is  quite different
from the dimension of time addressed in this paper.

\subsection{Implications of our results}
One consequence of Theorem~\ref{thm:qma} is that there exist one-dimensional systems which take exponentially long to relax to the ground state, with any reasonable environment or cooling strategy.  To see this, we invoke a quantum version of the modern Church-Turing thesis,%
\footnote{The modern Church-Turing thesis is discussed, for instance, in~\cite{MCTthesis}.}
which would state that any reasonable physical system can be efficiently simulated on a quantum computer.  As it invokes a notion of a ``reasonable'' physical system, this is of course not a provable statement, but one can be convinced of it by looking at examples.  The details of doing so for any particular system may be complicated, but no counter-example is known, and barring undiscovered laws of physics modifying quantum mechanics, the quantum Church-Turing thesis is believed to be true.

In this particular case, we could thus use a quantum computer to simulate the system plus the environment.  If the system reliably relaxes to the ground state in a time $T$, then by causality, we need only simulate an environment of size $O(T^3)$ (for a three-dimensional environment); the quantum simulation can be performed using standard techniques and runs in polynomial time.  We can then use the simulation to solve \rDHAM{1}{12}\ problems. Since the latter problem is difficult, that implies the simulation must also take a long time to reach the ground state.
This observation assumes, of course, that $\QMA$ is exponentially hard to solve by a quantum computer.

Similarly, we can argue that the class of systems presented in this
paper will take an exponentially long time to relax to the
thermal equilibrium state at very low temperatures (again,
assuming $\QMA$ is hard for a quantum computer).  To see this,
suppose that there is a state of the system with energy less than
$E$ (the bound on the ground state energy from
Definition~\ref{def:localHam}), and note that at a temperature less
than $O(\Delta/n)$ (with $\Delta$ polynomially small in the system
size $n$), the Gibbs distribution gives a constant probability of the
system occupying the low-lying state, even if there are exponentially
many eigenstates with energy just above $E + \Delta$.  Of course,
if there is no state with energy less than $E+\Delta$, then the system
will never have energy less than $E+\Delta$, no matter what the temperature.
Therefore, if the system were able to reach its thermal equilibrium state,
by observing it, we would be sampling from the Gibbs distribution, and
would, with good probability, be able to answer the $\QMA$-complete
\rDHAM{1}{12}\ problem.

Exponentially long relaxation times are a characteristic of a {\it spin
glass}~\cite{spinglassreview}, making this system a candidate to be
a one-dimensional spin glass. Spin glasses are not completely
understood, nor is there a precise definition of the term, but it
has been unclear whether a true one-dimensional spin glass can
exist.  There are a number of properties relevant to being a spin
glass, long relaxation times among them, and classical
one-dimensional systems generally lack that property and most
other typical spin glass properties.  On the other hand, some other
one-dimensional quantum systems are known~\cite{fisher} that exhibit
long relaxation times and some, but not all, other spin-glass-like properties.
Our result may help shed further light on the situation.  Unfortunately,
we are unable to present a specific Hamiltonian with long relaxation
times. As usual in complexity, it is difficult to identify specific hard
instances of computationally hard problems. Similarly, we cannot say very
much about other physically interesting properties of the system,
such as the nature or existence of a liquid/glass phase transition.

\medskip
This paper is a merger of \cite{irani} and version $1$ of
\cite{AGK}, two similar papers originally written independently.
The extended abstract of the  merged version appeared in
\cite{focsVersion}.  Following the initial version of our paper, various
groups have improved and extended our result.  For instance,
\cite{NWtranslation} gives a $20$-state translation-invariant modification of our
construction (improving on a $56$-state construction by \cite{JWZtranslation})
that can be used for universal 1-dimensional adiabatic computation.
Kay~\cite{Kaytranslation} gives a $\QMA$-complete \rDHam{1}{r} Hamiltonian
with all two-particle terms identical.  Nagaj~\cite{Nagajthesis} has improved
our $\QMA$ construction to use $11$-state particles.

\subsection{Notation}

A {\em language} is computer science terminology for a computational
problem; it can be defined as a set of bit strings. Frequently we
consider the language to be a proper subset of all possible bit
strings, but here we will be more concerned with languages which
satisfy a {\em promise}, meaning we consider the language as a
subset of a smaller set of bit strings which satisfy some
constraint. We will refer to the set of strings which satisfy the
constraint as the {\em instances} of the problem. The instances then
can be  partitioned into two sets $L_{\rm yes}$, the set of instances which are
in the language,
and $L_{\rm no}$, the set of instances which are not in the language.
The {\em decision problem} associated with the language
is to decide if a given bit string is in the language or not.  The instances
represent different possible inputs about which we are attempting to
reach a yes/no decision. For example,
satisfiability (or SAT) is the decision problem for the language
composed of Boolean formulas which have satisfying assignments of
the variables; each instance encodes a possible formula, some of
which can be satisfied and some of which cannot.  A {\em
complexity class} is a set of languages.

The precise definition of the complexity class $\QMA$ is as follows:

\begin{definition}
A language $L$ is in $\QMA$ iff for each instance $x$ there exists
a uniform polynomial-size quantum circuit%
\footnote{A {\em uniform} circuit is one whose description can be
generated in polynomial time by a Turing machine.}
$C_x$ such that (a) if $x \in L_{\rm yes}$, $\exists \ket{\psi}$ (the
``witness'', a polynomial-size quantum state) such that $C_x
\ket{\psi}$ accepts with probability at least $2/3$, and (b) if $x
\in L_{\rm no}$, then $\forall \ket{\psi}$, $C_x \ket{\psi}$ accepts with
probability at most $1/3$.  We only consider strings which are instances
of $L$.
\end{definition}

We say we {\em reduce} language $L$ to language $M$ if we have a
function $f$ converting bit strings $x$ to bit strings $f(x)$ such
that if $x \in L_{\rm yes}$ then $f(x) \in M_{\rm yes}$ and if $x \in L_{\rm no}$,
then $f(x) \in M_{\rm no}$.  We do not put any requirement on the behavior
of $f$ when the input $x$ does not satisfy the promise.  $f$ must be computable
classically in polynomial time.  The point is that if we can solve
the decision problem for $M$ then we can automatically also solve
the decision problem for $L$ by first converting $x$ to $f(x)$ and
then checking if $f(x)$ is in $M$.  There are other notions of
reduction, but this will be sufficient for our purposes.  A language
$L$ is {\em complete} for a complexity class $\comclass$ if any
language in $\comclass$ can be reduced to $L$.  Thus, solving a
$\comclass$-complete language $L$ implies the ability to solve any
problem in $\comclass$ with polynomial time overhead; $L$ is thus
one of the most computationally difficult problems in $\comclass$.

We will be interested in reducing an arbitrary $\QMA$ language $L$
to variants of the language \kdLOCAL{k}{r}.  We will use $x$ to
refer to an instance of the original problem $L$, and $C_x$ to refer
to the checking circuit associated with the instance $x$.  That is,
we assume that information about the instance is encoded in the
structure of the checking circuit.  The circuit $C_x$ acts on $n$
qubits. Some of the qubits will be ancilla qubits which must be
initialized to the state $\ket{0}$, whereas others are used for the
potential witness for $x \in L$.

In the context of adiabatic quantum computation, we will also use
$C_x$ to refer to a more general quantum circuit, unrelated to a
$\QMA$ problem. In this case, we consider $C_x$ to be in a form
where all input qubits are $\ket{0}$. When the adiabatic quantum
computation is intended to compute some classical function (such as
factoring), $x$ is the classical input to the function, which we
hardwire into the structure of the circuit itself.

\subsection{Outline of the approach}
\label{outline}

To explain the main idea behind the proofs of our two main theorems,
we recall Kitaev's proof that $5$-LOCAL HAMILTONIAN is \QMA-complete.

The fact that \kdLOCAL{k}{r} is in $\QMA$ is not difficult.
The witness is the ground state of the Hamiltonian (or indeed any
state with energy less than $E$), and a standard phase estimation
technique can be used to measure its energy.  The accuracy of the
measurement is not perfect, of course, which is why we need the
promise that the energy is either below $E$ or above $E + \Delta$.

To prove completeness, we need to reduce an arbitrary $\QMA$
language $L$ to $5$-LOCAL HAMILTONIAN.  The translation to local
Hamiltonians is done by creating a Hamiltonian whose ground state is
of the form $\sum_t \ket{\phi_t} \ket{t}$ (ignoring normalization),
where the first register $\ket{\phi_t}$ is the state of the checking
circuit $C_x$ at time $t$ and the second register acts as a clock.
We call this state the
{\em history state} of the circuit.%
\footnote{We call a state a history state if it has the above form
for {\em any} input $\ket{\phi_0}$ to $C_x$, not just the correct
input.}
The main term in Kitaev's Hamiltonian sets up constraints ensuring
that $\ket{\phi_{t+1}} = U_t \ket{\phi_{t}}$, where $U_t$ is the
$(t+1)$-st gate in $C_x$. This term, denoted $\Hprop$, ensures that
the ground state is indeed a history state, reflecting a correct
propagation in time according to $C_x$. The clock is used to
associate the correct constraint with each branch of the
superposition; any state which does not have the correct time
evolution for the circuit will violate one or more constraints and
will thus have a higher energy.

Kitaev's Hamiltonian includes more terms, which guarantee that the
input to $C_x$ is correct, that the final state of the circuit is
accepted, and that the state of the clock is a valid clock state and
not some arbitrary state.  In the context of adiabatic evolution,
these additional terms are not needed, since we have control over
the initial state, but they are needed to prove $\QMA$-completeness,
since we must be able to check (i) that the witness being tested for
the $\QMA$-complete problem has the correct structure, corresponding
to a valid history state for a possible input to $C_x$, and (ii)
that $C_x$ accepts on that input.

In order to prove the universality of adiabatic quantum computation,
\cite{ad1} let Kitaev's Hamiltonian $\Hprop$ be the final
Hamiltonian in the adiabatic evolution, and set the initial
Hamiltonian to be diagonal in the standard basis, forcing the
initial ground state to be the correct input state
$\ket{\phi_0}\ket{0}$ of the circuit. At any point during the
adiabatic quantum computation, the state of the system is then in a
subspace ${\cal I}$ spanned by the states $\ket{\phi_t}\ket{t}$, and
the spectral gap of any convex combination of the initial and final
Hamiltonians restricted to this subspace is at most polynomially
small in the size of the original circuit. Therefore, the adiabatic
quantum computation can be performed with at most polynomial
overhead over the original circuit, to reach a state which is very
close to the history state, from which the result of the original
circuit can be measured with reasonable probability.

Our idea for proving universality of adiabatic evolution and
$\QMA$-completeness in one dimension is similar. One would consider
a quantum circuit $C_x$, and design a \rDHam{1}{r} Hamiltonian which
will verify correct propagation according to this circuit, and then
use this Hamiltonian as the final Hamiltonian for the adiabatic
evolution or as the instance of the $\QMA$-complete problem
corresponding to the instance $x$ of the original problem.

This idea, however, is not easy to realize when our Hamiltonian is
restricted to work on a one- or two-dimensional lattice, since we
cannot directly access a separate clock register, as only the
subsystems nearest to the clock would be able to take advantage of
it in order to check correct propagation in time.  Instead,
following the strategy of~\cite{ad1}, we must first modify the
original circuit $C_x$ into a new circuit $\Cmod{x}$.  This allows
us to distribute the clock, making the time implicit in the global
structure of the state. When time is encoded somehow in the
configuration, then it can be ensured that the transition rules
allow only propagation to the correct next step. The construction of
\cite{ad1} used the following arrangement: the qubits of the
original circuit were put initially in the left-most column in the
two-dimensional grid, and one set of gates was performed on them. To
advance time and perform another set of gates, the qubits were moved
over one column, leaving used-up ``dead'' states behind. Doing all
this required going up to $6$ states per particle instead of $2$,
with the qubits of the original circuit encoded in various
two-dimensional subspaces.  The time could therefore be read off by
looking at the arrangement and location of the two-dimensional
subspaces containing the data.

This construction relies heavily on the ability to copy qubits to
the next column in order to move to the next block of gates in the
computation, so a new strategy is needed in one dimension. For the
modified circuit $\Cmod{x}$, we instead place the qubits in a block
of $n$ adjacent particles. We do one set of gates, and then move all
of the qubits over $n$ places to advance time in the original
circuit $C_x$.  It is considerably more complicated to move qubits
$n$ places than to move them over one column in a two-dimensional
arrangement, and we thus need extra states.  The adiabatic
construction can be done using $9$ states per particle, which we
increase to $12$ for $\QMA$-completeness.

A straightforward application of existing techniques can then
complete the proof of one-dimensional universal adiabatic quantum
computation, but there is an additional wrinkle for proving
$\QMA$-completeness. In particular, previous results used local
constraints to ensure that the state of the system had a valid
structure; for instance, in \cite{ad1}, terms in the Hamiltonian
check that there are not two qubit states in adjacent columns.
However, using only local constraints, there is no way to check that
there are exactly $n$ qubit data states in an unknown location in a
one-dimensional system --- there are only a constant number of local
rules available, which are therefore unable to count to an
arbitrarily large $n$. We instead resort to another approach, which
might be useful elsewhere.  While our modified circuit has invalid
configurations (containing, for instance, too many qubit states)
which cannot be locally checked, we ensure that, under the
transition rules of the system, any invalid configurations will
evolve in polynomial time into a configuration which can be detected
as illegal by local rules. Thus, for every state which is not a
valid history state, either the propagation is wrong, which implies
an energy penalty due to the propagation Hamiltonian, or the state
evolves to an illegal configuration which is locally detectable,
which implies an energy penalty due to the local check of illegal
configurations. We call this result the {\em clairvoyance lemma}.

\medskip
The structure of the paper is as follows: in Section \ref{sec:protocol} we describe how to map circuits to
one-dimensional arrangements. Section \ref{sec:adiabatic} shows our result on the universality of adiabatic
computation on the line, and Section \ref{sec:QMA} gives the result on $\QMA$-completeness. We conclude in
Section \ref{sec:conclude}.

\section{The basic construction}
\label{sec:protocol}

In this section, we will describe our construction which maps a
quantum circuit $C_x$ to a modified circuit $\Cmod{x}$.  We first
present the larger $12$-state $\QMA$ construction, and explain how
to modify it to get the $9$-state adiabatic construction at the end
of this section.  There are a number of properties which $\Cmod{x}$
must satisfy. It should perform the same computation as $C_x$, of
course. In addition, each gate in $\Cmod{x}$ must interact only
nearest-neighbor particles in a line; on the other hand, we allow
those particles to be $12$-dimensional. The gate performed at any
given time cannot depend explicitly on the time (the number of gates
already performed), but can depend on location. We will define the
gates in terms of a variety of possible transition rules, and to
remove any ambiguity, we will ensure that for a legal state of the
system, only one transition rule will apply at any given time.  For
$\QMA$-completeness, we need some additional properties ensuring
that enough of the constraints are locally checkable.

The problem of moving to one dimension is somewhat similar to a
quantum cellular automaton in that the transition rules need to
depend only on the local environment and not on some external time
coordinate, but differs from a cellular automaton in a number of
ways.  A cellular automaton acts on all locations simultaneously and
in the same way, whereas our transition rules are required to only
cause one pair of particles to change per time step (provided the
 system is in a state of the correct form), and the
transition rules differ slightly from location to location.

A better analogy is to a single-tape Turing machine.  We will have a
single ``active site'' in our computation, analogous to the head of
a Turing machine, which moves around manipulating the computational
qubits, changing states as it does so in order to perform different
kinds of actions. The transition rules of a Turing machine are
independent of the location. In the construction below, we have a
few different kinds of locations, for instance for the different
sorts of quantum gates used in $C_x$.  Each kind of location has a
different set of transition rules that apply (although many of the
rules are the same for all types of location), and the position of a
site determines which set of rules applies to that site.  For the
adiabatic construction, we could instead proceed with
translational-invariant rules as with a Turing machine by
initializing each site with an additional marker indicating what
type of location it is supposed to be.  This would require a
substantial increase in the number of states per site (perhaps to
$100$ or so), and we do not know how to use translational invariant
nearest-neighbor rules for the $\QMA$ result. Instead we use
position-dependent rules.  This enables us to use essentially the
same construction for both the adiabatic result and the
QMA-completeness result, and reduces the required number of states
in the adiabatic case.

In the beginning we put the quantum circuit we wish to simulate into
a canonical form. Let $n$ be the number of qubits in the circuit,
labeled $1 \ldots n$ from left to right in a line. We assume that
the circuit is initialized to the all $\ket{0}$ state and consists
of $R$ rounds.  Each round is composed of $n-1$ nearest-neighbor
gates (some or all of which may be the identity gate); the first
gate in a round acts on qubits $1$ and $2$, the second on qubits $2$
and $3$, and so on. Any quantum circuit can be put into this form
with the number of rounds at most proportional to the number of
gates. We will assume that the first round and the last round
contain only identity gates, as we will need those two rounds for
additional checking in the $\QMA$-completeness result.

In the 1D arrangement, there will be a total of $nR$ $12$-state particles, arranged on a line in $R$ blocks
of $n$ qubits, each block corresponding to one round in the circuit. Roughly speaking, we imagine that in the
beginning the qubits are all in the first block. The first round of quantum gates is performed one by one on
these qubits, after which they are moved to the next block, where the second round of gates is performed, and
so on.  Once the qubits reach the last block and undergo the last round of gates, their state will be the
final state of the circuit.

The main difficulty we will have to overcome is our inability to
count.  Since there are only a constant number of states per site,
there is no way to directly keep track of how far we have moved. Our
solution is to move the full set of qubits over only one space at a
time. We keep moving over until the qubits reach the next block.  We
can tell that we have reached a new block, and are ready to perform
a new set of gates, by making the transition rules different at the
boundary between two blocks.  When the qubits are correctly aligned
with a block, a qubit state and a non-qubit state will be adjacent
across a block boundary, whereas while the qubits are moving,
adjacent pairs of particles which cross a block boundary will either
have two qubit states or two non-qubit states.

We denote a block boundary by $\cdot$\block$\cdot $. The $12$ states
in each site consist of $2$-state subsystems (different versions of
a qubit holding data), represented by elongated shapes (e.g.,
\waitr), and $1$-state subspaces, represented by round shapes (e.g.,
\unborn). Two of the $2$-state systems and two of the $1$-state site
types will be ``flags'' or ``active'' sites, which will be
represented by dark shapes and can be thought of as pointers on the
line that carry out the computation. Light-colored shapes represent
a site that is inactive, waiting for the active site to come nearby.
There will only be one active site in any (valid) configuration.  We
have the following types of states:
\medskip

\begin{tabular}{|l|l|}
  \hline
 Inactive states  & Flags (active states) \\ \hline
  \waitr: Qubits to right of active site & \gateflag: Gate marker (moves right) \\
  \waitl: Qubits to left of active site & \rflag: Right-moving flag\\
  \unborn: Unborn (to right of all qubits) &  \lflag: Left-moving flag, moves qubits right one space \\
  \dead: Dead (to left of all qubits) & \turn: Turning flag \\
  \hline
\end{tabular}

\medskip
\noindent The \gateflag\ and \rflag\ active sites are qubit states.
In the analogy to a Turing machine, they can be thought of as the
head sitting on top of a qubit on the tape.  The \lflag\ and \turn\
flags are one-dimensional subspaces, which sit between or next to
the particles which are in qubit states.

\begin{definition}
\label{def:legal} We use the term {\em configuration} to refer to an
arrangement of the above types of states without regard to the value
of the data stored in the qubit subsystems. {\em Valid} (or {\em
legal}) configurations of the chain have the following structure:
\begin{list}{}{}
\item \dead \cdotiki \dead ({\bf qubits})\unborn \cdotiki \unborn,
\end{list}
\noindent where the \dead \cdotiki \dead\ string  or \unborn \cdotiki \unborn\ string might not appear if the
qubits are at one end of the computer. The ({\bf qubits}) string consists of either $n$ sites (for the first
two possibilities) or $n+1$ sites (for the last three choices) and is of one of the following forms:

\begin{tabular}{l@{}c@{}r@{}l}
 \waitl \cdotiki & \waitl \gateflag \waitr & \cdotiki \waitr &\\

\waitl \cdotiki & \waitl \rflag \waitr & \cdotiki \waitr &\\

 \waitl \cdotiki & \waitl \lflag \waitr & \cdotiki \waitr & \waitr\\

 \turn \waitr & \cdotiki & \waitr & \waitr\\

 \waitl & \cdotiki &  \waitl & \turn
\end{tabular}

\noindent In the first three cases, either the \waitl\ string or the \waitr\ string might be absent when the
active site is at one end of the qubits. (The \turn\ flag is only needed at the left or right end of a string
of qubits.)  When the flag is \gateflag, the $n$ qubit sites are lined up inside a single block; for all
other values of the flag, the ({\bf qubits}) string crosses a block boundary.
\end{definition}

The initial configuration is \block \gateflag \waitr \cdotiki \waitr
\block \unborn \cdotiki \unborn \block \cdotiki \block \unborn
\cdotiki \unborn \block\  where the \gateflag\ and the \waitr\
qubits are in the state corresponding to the input of the original
circuit $C_x$. Following the transition rules below, the \gateflag\
sweeps to the right, performing gates as it goes. When it reaches
the end of the qubit states at the border of the next block, it
becomes a \turn, which in turn creates a \lflag\ flag which sweeps
left, moving each qubit one space to the right in the process. When
the \lflag\ flag reaches the left end, it stops by turning into a
\turn\ which then creates a \rflag\ flag, which moves right through
the qubits without disturbing them. The \rflag\ flag hits the right
end of the set of qubits and becomes a \turn, which begins a \lflag\
flag moving left again. The \lflag-\turn-\rflag-\turn\ cycle
continues until the qubits have all been moved past the next block
boundary. Then we get a \turn \block \waitr\ arrangement, which
spawns a new \gateflag, beginning the gate cycle again. The
evolution stops when the qubits reach the last block boundary and
the gate flag reaches the end of the line, i.e., the final
configuration is \block \dead \cdotiki \dead \block \cdotiki \block
\dead \cdotiki \dead \block \waitl \cdotiki \waitl \gateflag \block.

We have the following transition rules.  Something of the form XY is never across a block boundary, whereas
X\block Y is.  \block Y and X\block\ represent the left and right end of the chain, respectively.

\begin{enumerate}
\item (Gate rule) \gateflag \waitr\ $\rightarrow$ \waitl \gateflag\ (performing the appropriate gate between the two encoded
qubits)

\item (Turning rules right side) \gateflag \block \unborn\ $\rightarrow$ \waitl \block \turn, \turn \unborn\
$\rightarrow$ \lflag \unborn, \turn \block \unborn\ $\rightarrow$ \lflag \block \unborn, \turn \block\
$\rightarrow$ \lflag \block

\item (Sweeping left rules)  \waitl \lflag\ $\rightarrow$ \lflag \waitr, \waitl \block \lflag\ $\rightarrow$
\lflag \block \waitr

\item (Turning rules left side) \dead \lflag\ $\rightarrow$ \dead \turn,  \dead \block \lflag\ $\rightarrow$
\dead \block \turn, \block \lflag\ $\rightarrow$ \block \turn, \turn \waitr\ $\rightarrow$ \dead \rflag

\item (Sweeping right rules) \rflag \waitr\ $\rightarrow$ \waitl \rflag, \rflag \block \waitr\ $\rightarrow$
\waitl \block \rflag, \rflag \unborn\ $\rightarrow$ \waitl \turn

\item (Starting new round rule) \turn \block \waitr\ $\rightarrow$ \dead \block \gateflag
\end{enumerate}

\begin{figure}[ht]
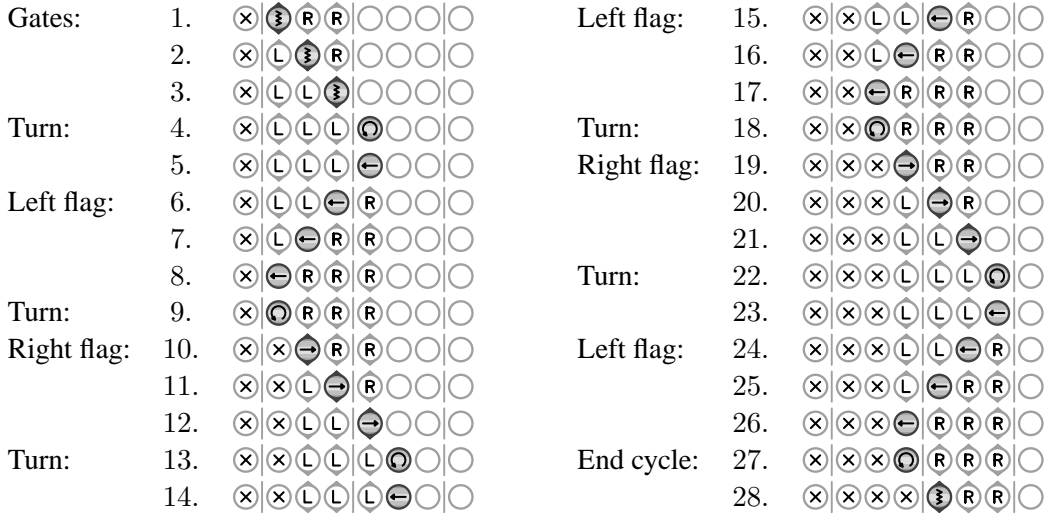

\begin{tabular}{lcr@{}c@{}c@{}c@{}lcc}
Gates: &$1.$ & \dead \block &  \gateflag \waitr \waitr & \block & \unborn \unborn \unborn & \block \unborn &
\\&$2.$ & \dead \block & \waitl \gateflag \waitr & \block & \unborn \unborn \unborn & \block\unborn &
\\&$3.$ & \dead \block & \waitl \waitl \gateflag & \block & \unborn
\unborn \unborn & \block \unborn \\
Turn: & $4.$ & \dead \block & \waitl \waitl \waitl & \block & \turn \unborn \unborn & \block\unborn &
\\&$5.$ & \dead \block & \waitl \waitl \waitl & \block & \lflag \unborn \unborn & \block \unborn &  \\
Left flag: & $6.$ & \dead \block & \waitl \waitl \lflag & \block & \waitr \unborn \unborn & \block\unborn &
\\&$7.$ & \dead \block & \waitl \lflag \waitr & \block & \waitr \unborn \unborn & \block \unborn &
\\&$8.$ & \dead \block & \lflag \waitr \waitr & \block & \waitr \unborn \unborn & \block \unborn \\
Turn: & $9.$ & \dead \block & \turn \waitr \waitr & \block & \waitr \unborn \unborn & \block \unborn & \\
Right flag: & $10.$ & \dead \block & \dead \rflag \waitr & \block & \waitr \unborn \unborn & \block\unborn &
\\&$11.$ & \dead \block & \dead \waitl \rflag & \block & \waitr \unborn \unborn & \block\unborn &
\\&$12.$ & \dead \block & \dead \waitl \waitl & \block & \rflag
\unborn \unborn & \block \unborn \\
Turn: & $13.$ & \dead \block & \dead \waitl \waitl & \block & \waitl \turn \unborn & \block\unborn &
\\&$14.$ & \dead \block & \dead \waitl \waitl & \block & \waitl \lflag \unborn & \block \unborn
&&
\end{tabular}
\begin{tabular}{lcr@{}c@{}c@{}c@{}lccccc}
Left flag: & $15.$ & \dead \block & \dead \waitl \waitl & \block & \lflag \waitr \unborn & \block\unborn &
\\&$16.$ & \dead \block & \dead \waitl \lflag & \block & \waitr \waitr \unborn & \block\unborn &
\\&$17.$ & \dead \block & \dead \lflag \waitr & \block & \waitr
\waitr \unborn & \block \unborn \\
Turn: & $18.$ & \dead \block & \dead \turn \waitr & \block &  \waitr
\waitr \unborn & \block \unborn & & & & \\
Right flag: & $19.$ & \dead \block & \dead \dead \rflag & \block & \waitr \waitr \unborn & \block\unborn &
\\&$20.$ & \dead \block & \dead \dead \waitl & \block & \rflag \waitr \unborn & \block\unborn &
\\&$21.$ & \dead \block & \dead \dead \waitl & \block & \waitl \rflag
\unborn & \block \unborn \\
Turn: & $22.$ & \dead \block & \dead \dead \waitl & \block & \waitl \waitl \turn & \block\unborn &
\\&$23.$ & \dead \block & \dead \dead \waitl & \block & \waitl \waitl \lflag & \block \unborn & & \\
Left flag: & $24.$ & \dead \block & \dead \dead \waitl & \block & \waitl \lflag \waitr & \block\unborn &
\\&$25.$ & \dead \block & \dead \dead \waitl & \block & \lflag \waitr \waitr & \block\unborn &
\\&$26.$ & \dead \block & \dead \dead \lflag & \block & \waitr \waitr
\waitr & \block \unborn \\
End cycle: & $27.$ & \dead \block & \dead \dead \turn & \block & \waitr \waitr \waitr & \block\unborn
&\\&$28.$ & \dead \block & \dead \dead \dead & \block & \gateflag \waitr \waitr & \block\unborn
\end{tabular}
\caption{A full cycle, with $n=3$. Time goes down the first and then the second column.} \label{fig:protocol}
\end{figure}

We can see by inspection that these rules create the cycle described
above; Figure \ref{fig:protocol} follows an example with $n=3$
step-by-step through a full cycle.  A single cycle takes $n(2n+3)$
moves, so the full computation from the initial configuration to the
final one takes a total of $K = n(2n+3)(R-1)+n-1$ steps.  (The
additional $n-1$ steps are for the gate flag to move through the
final block.)

The only tricky part is to note the following fact:

\begin{fact}\label{fact:onefuture} For any configuration
containing only one active site, there is at most one transition
rule that applies to move forward one time step and at most one
transition rule to move backwards in time by one step. For {\em
valid} configurations, there is always {\em exactly} one forwards
transition and exactly one backwards transition, except for the
final and initial states, which have no future and no past,
respectively.
\end{fact}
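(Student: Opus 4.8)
The plan is to exploit the structural observation that the left-hand side of every transition rule contains exactly one active site (one of \gateflag, \rflag, \lflag, \turn), all other symbols being inactive states (\waitr, \waitl, \unborn, \dead) or block boundaries. Hence, in any configuration with a single active site, a rule can fire only at the location of that site, and it suffices to analyze the rules keyed to the type of the unique flag. For each flag type one reads off the \emph{local trigger} of a rule: the immediate neighbor of the flag on the relevant side, together with whether a block boundary intervenes and whether the flag abuts an end of the chain. The ``at most one'' claim then reduces to checking that, among rules acting on a fixed flag type, these local triggers are pairwise mutually exclusive.

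For the forward direction I would group the forward rules by the flag they act on. For a \gateflag, rule~1 fires exactly when the right neighbor is \waitr\ with no intervening boundary, whereas the first turning rule fires exactly on \gateflag\block\unborn; these contexts are disjoint. For a \turn, the rule is decided entirely by the right neighbor: whether it is \unborn, \waitr, or a boundary, and in the boundary case whether the next symbol is \unborn, \waitr, or absent (chain end) --- the mutually exclusive cases covered by the turning and new-round rules. The \lflag\ case is read off the left neighbor (the sweeping-left and left-turning rules) and the \rflag\ case off the right neighbor (the sweeping-right rules), both identical in spirit. In every case the local context selects at most one rule.

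The backward direction is the mirror image: a predecessor of a configuration $C$ exists precisely when $C$ matches the \emph{output} pattern of some rule at its active site, and since each rule is a local substitution it determines its unique predecessor. Grouping outputs by the resulting flag type and reading the appropriate neighbor again shows the patterns are pairwise disjoint --- for instance a \turn\ can only have arisen with left neighbor \waitl, with left neighbor \dead, or across a boundary whose far side is \waitl, \dead, or a chain end, which are five exclusive cases. Thus at most one backward transition exists. Neither argument invokes validity, so ``at most one'' holds for all single-active-site configurations.

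The remaining task is to promote ``at most one'' to ``exactly one'' for valid configurations other than the endpoints, and this is the only step requiring genuine bookkeeping. Running through the five families of valid configurations in Definition~\ref{def:legal}, one checks that the flag always sits in exactly one of the local contexts catalogued above, so a matching forward and backward rule always exists. The two exceptions are forced: in the initial configuration the \gateflag\ sits as \block\gateflag\waitr\ at the left end, matching no rule output (there is neither a \waitl\ nor a \dead\block\ to its left), hence has no past; in the final configuration it sits as \waitl\gateflag\block\ at the right end, matching no rule input (neither \gateflag\waitr\ nor \gateflag\block\unborn), hence has no future. The only real care throughout is the faithful tracking of block boundaries and chain ends, since rules such as \turn\unborn, \turn\block\unborn, and \turn\block\ are distinguished solely by this boundary data; once that is handled, mutual exclusivity is immediate.
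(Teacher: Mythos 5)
Your proposal is correct and follows essentially the same route as the paper's proof: both rest on the observation that each rule is keyed to the unique active site and is selected by the state of its neighbor on one fixed side (right for \gateflag, \rflag, \turn; left for \lflag; mirrored for backwards transitions), so the local contexts are pairwise disjoint and at most one rule applies, with validity (alignment with block boundaries) supplying the "exactly one" claim away from the initial and final configurations. Your version simply makes explicit the case bookkeeping that the paper leaves to inspection.
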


\begin{proof}
We can see this by noting that the transition rules only refer to
the type of active site present and to the state of the site to the
right (for \gateflag, \rflag, and \turn) or left (for \lflag) of the
active site (or left and right respectively for backwards
transitions). For a configuration with only one active site, there
is therefore at most one transition rule that applies. We can also
see that for every legal configuration, there is indeed a transition
rule that applies; the few cases that may appear to be missing
cannot arise because of the constraints on alignment of the string
of qubits with the blocks.
\end{proof}
The illegal cases without forward transitions are \gateflag \block
\waitr, \gateflag \unborn, \rflag \block \unborn, and \rflag \block,
and the illegal cases without backwards transitions are \waitl
\block \gateflag, \dead \gateflag, \dead \block \rflag, and \block
\rflag.  These arrangements will play an important role in
Section~\ref{sec:QMA}.

{\em Remark:} For adiabatic computation, we can simplify this
construction a bit. For the proof of $\QMA$-completeness, we will
need to distinguish the particles to the left of the active site
from the particles to its right, but this is not necessary for
universal adiabatic computation. Therefore, we may combine the
\waitl\ and \waitr\ states into a single qubit subspace \waitr, and
we may combine \dead\ and \unborn\ into a single state \unborn. That
leaves us with essentially the same construction, but with only $9$
states per particle instead of $12$.

\section{Universality of adiabatic evolution in one dimension}\label{sec:adiabatic}
We first recall the definition of adiabatic computation \cite{farhiad,farhi}. A quantum system slowly evolves
from an easy to prepare ground state of some initial Hamiltonian $H_0$ to the ground state of a tailored
final Hamiltonian $H_1$, which encodes the problem. The state evolves according to Schr\"odinger's equation
with a time-varying Hamiltonian $H(s)= (1-s)H_0+sH_1$, with $s=t/T$, where $t$ is the time and $T$ is the
total duration of the adiabatic computation. Measuring the final ground state in the standard basis gives the
solution of the problem. The adiabatic theorem tells us that if $T$ is chosen to be large enough, the final
state will be close to the ground state of $H_1$. More specifically, we can define instantaneous Hamiltonians
$H(t/T)$, and let $g_{\rm min}$ be the minimum of the spectral gaps over all the Hamiltonians $H(t/T)$. $T$
is required to be polynomial in $1/g_{\rm min}$.

To prove Theorem \ref{thm:adiabatic}, given a general quantum
circuit, we would like to design an efficient simulation of this
circuit by adiabatic evolution on a one-dimensional line. The
construction in Section \ref{sec:protocol} associates to any circuit
$C_x$ with $n$ qubits a modified circuit $\Cmod{x}$ using $nR$
$9$-state particles in one dimension (with some states merged
according to the remark in the end of Sec. \ref{sec:protocol}),
together with transition rules which evolve the system for
$K=n(2n+3)R+n-1$ steps through $K+1$ configurations from initial to
final. We denote the quantum state of the $t$-th configuration by
$\ket{\gamma(t)}$ for $t=0,\ldots ,K$. Observe that the
$\ket{\gamma(t)}$ are orthogonal to each other, since each has a
different configuration. We note that the last state
$\ket{\gamma(K)}$ contains the final state of the original circuit,
encoded in the last block of particles.

Next we construct a one-dimensional adiabatic evolution whose final
Hamiltonian has as its ground state the history state of this
evolution, namely,
\[ \frac{1}{\sqrt{K+1}}\sum_{t=0}^{K} \ket{\gamma(t)}.\]
To define $H_1 \equiv \Hprop$, we simply translate the transition
rules of the previous section to \kdlocal{2}{9} Hamiltonians. In
other words, any transition rule which takes a state $|\alpha\ra$ to
$|\beta\ra$ is translated to a Hamiltonian of the form $\frac{1}{2}
(\ketbra{\alpha}{\alpha} + \ketbra{\beta}{\beta} -
\ketbra{\alpha}{\beta} - \ketbra{\beta}{\alpha})$. E.g., the first
rule becomes $\frac{1}{2}(\ketbra{\gateflag \waitr}{\gateflag
\waitr} + U \ketbra{\waitr \gateflag}{\waitr \gateflag} U^\dagger -
U \ketbra{\waitr \gateflag}{\gateflag \waitr} - \ketbra{\gateflag
\waitr}{\waitr \gateflag} U^\dagger)$, summed over a basis of states
for the encoded qubits and over all nearest neighbor pairs of
particles (recall we have combined \waitr\ and \waitl).  Here $U$ is
the gate corresponding to this location and acts on the two qubits
encoded as $\ket{\waitr \gateflag}$.  Restricted to the
eight-dimensional subspace spanned by $\ket{\gateflag \waitr}$ and
$\ket{\waitr \gateflag}$ (each for values of $\ket{00}$, $\ket{01}$,
$\ket{10}$, and $\ket{11}$ for the two qubits) for a single pair of
particles, that would give the matrix
\begin{equation*}
\frac{1}{2} \left( \begin{array}{cc}%
I & -U \\
-U^\dagger & I
\end{array} \right),
\end{equation*}
with $I$ the $4 \times 4$ identity matrix.

Our initial Hamiltonian has the initial configuration
$\ket{\gamma(0)}$ as its ground state. To define $H_0$, we penalize
all configurations that do not have
\gateflag\ in its $\ket{0}$ state ($\ket{\gateflag(0)}$) in the
first position:
\[
 H_0 = I - \ket{\gateflag(0)}\bra{\gateflag(0)}_1
%
 \]

This Hamiltonian of course has a highly degenerate ground state, but the important point is that
$\ket{\gamma(0)}$ is the only state which satisfies $H_0$ in the invariant subspace spanned by the
$\ket{\gamma(t)}$.  We assume we are able to select the correct ground state $\ket{\gamma(0)}$ to be the
actual initial state of the adiabatic computation.  We could avoid the need for this by various methods, for
instance by having an additional phase of adiabatic evolution from a simple non-degenerate Hamiltonian to
$H_0$.  Another option is to break the degeneracy of the ground state of $H_1$ by using the full $12$-state
construction with the Hamiltonian described in Section~\ref{sec:QMA}.  Even then, we either need to modify
$H_0$ to be non-degenerate (this approach requires a new analysis of the spectral gap, which we have not
done), or to add an additional $13$th start state. We outline the latter approach in Section~\ref{sec:nondegenerate}.

To analyze the spectral gap of any Hamiltonian in the convex
combination of $H_0$ and $H_1$, we essentially follow the proof of
\cite{ad1}, using an improvement due to \cite{Ruskai:adimprove}. We
first observe that
\begin{claim}
The subspace $\calK_0$ spanned by the states $\ket{\gamma(t)}$ is
invariant under any of the Hamiltonians $H(s)$.
\end{claim}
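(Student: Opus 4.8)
The plan is to exploit linearity together with the convex structure $H(s)=(1-s)H_0+s\Hprop$: since $H(s)$ is a convex combination of $H_0$ and $H_1=\Hprop$, it suffices to show that each of $H_0$ and $\Hprop$ maps $\calK_0$ into itself, and for that it is enough to check the action on each spanning vector $\ket{\gamma(t)}$, $t=0,\dots,K$. So I would reduce the claim to the two assertions $\Hprop\ket{\gamma(t)}\in\calK_0$ and $H_0\ket{\gamma(t)}\in\calK_0$ for every $t$, and then recombine.

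For $\Hprop$ the key input is Fact~\ref{fact:onefuture}. Recall that $\Hprop$ is a sum, over all nearest-neighbour pairs and all transition rules $\alpha\to\beta$, of local terms $\frac{1}{2}(\ketbra{\alpha}{\alpha}+\ketbra{\beta}{\beta}-\ketbra{\alpha}{\beta}-\ketbra{\beta}{\alpha})$. Such a term can act non-trivially on $\ket{\gamma(t)}$ only at a position where the local pattern of $\gamma(t)$ agrees with $\alpha$ or with $\beta$. Because every rule pattern contains exactly one active site, and a valid configuration has exactly one active site, each pattern matches $\gamma(t)$ at most at the single location of that active site. By Fact~\ref{fact:onefuture} there is then exactly one rule whose $\alpha$ matches $\gamma(t)$ (the unique forward transition, carrying $\gamma(t)$ to $\gamma(t+1)$) and exactly one whose $\beta$ matches (the unique backward transition, arriving from $\gamma(t-1)$), with the forward term absent at $t=K$ and the backward term absent at $t=0$. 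Collecting contributions, the diagonal pieces return multiples of $\ket{\gamma(t)}$ while the off-diagonal pieces produce $\ket{\gamma(t+1)}$ and $\ket{\gamma(t-1)}$; for the gate rule the unitary $U$ appearing in the term is exactly the gate already built into $\ket{\gamma(t+1)}$, so this is again a genuine trajectory state. Hence $\Hprop\ket{\gamma(t)}\in\mathrm{span}\{\ket{\gamma(t-1)},\ket{\gamma(t)},\ket{\gamma(t+1)}\}\subseteq\calK_0$.

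For $H_0=I-\ket{\gateflag(0)}\bra{\gateflag(0)}_1$ the identity clearly preserves $\calK_0$, so only the local projector $P=\ket{\gateflag(0)}\bra{\gateflag(0)}_1$ on the first particle needs attention. Here I would observe, from the transition rules (or directly from Figure~\ref{fig:protocol}), that the gate flag \gateflag\ occupies the first site only in the initial configuration $\gamma(0)$: for every $t\ge 1$ the first site of $\gamma(t)$ carries a site type orthogonal to \gateflag, so $P\ket{\gamma(t)}=0$, whereas $P\ket{\gamma(0)}$ is a scalar multiple of $\ket{\gamma(0)}$ (equal to $\ket{\gamma(0)}$ for the all-$\ket{0}$ input). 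In every case $H_0\ket{\gamma(t)}\in\calK_0$. Combining the two parts gives $H(s)\calK_0\subseteq\calK_0$ for all $s$.

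The only genuinely delicate step is the $\Hprop$ computation: one must rule out spurious contributions from rule terms whose pattern happens to appear in $\gamma(t)$ somewhere away from the active site, and must confirm that the unique forward and backward matches really land on the adjacent trajectory states rather than on some configuration outside $\{\gamma(s)\}$. Both points are exactly what the single-active-site structure and Fact~\ref{fact:onefuture} provide, so the main effort is to package that fact correctly; the $H_0$ part is then routine.
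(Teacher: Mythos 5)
Your proposal is correct and follows the same route as the paper, which simply observes that invariance under the convex combination $H(s)$ reduces to invariance under $H_0$ and $H_1=\Hprop$ separately and declares the rest "easily seen." You supply the details the paper omits --- using the single-active-site structure and Fact~\ref{fact:onefuture} to show $\Hprop\ket{\gamma(t)}$ lies in the span of the neighboring trajectory states, and noting that \gateflag\ sits on the first site only at $t=0$ for the $H_0$ projector --- and these details check out.
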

This is easily seen to be true by the fact that it is invariant
under the initial and final Hamiltonians. From this claim, it
follows that throughout the evolution we are {\em inside} $\calK_0$,
so we only need to analyze the spectral gap in this subspace. We
next move to the basis of $\calK_0$ spanned by the $\ket{\gamma(t)}$
states. In this basis $H_0$ restricted to $\calK_0$ looks like
\begin{eqnarray*}\label{eq:h_initial_s0}
H_0 |_{\calK_0} & = & \left(%
\begin{array}{cccc}
  0 &      0 & \ldots      & 0 \\
  0 &      1 & \ldots      & 0 \\
  \vdots & \vdots & \ddots & \vdots \\
  0 &      0 & \ldots & 1  \\
\end{array}%
\right),
\end{eqnarray*}
and $H_1$ restricted to $\calK_0$ becomes
\begin{eqnarray*}\label{eq:h_final_s0}
H_1 |_{\calK_0} & = &  \left(
\begin{array}{rrrrrrr}
\smfrac{1}{2} & \mns \smfrac{1}{2} &0 & & \cdots& & 0 \\ \mns \smfrac{1}{2} & 1 & \mns \smfrac{1}{2} & 0 &
\ddots & & \vdots\\ 0 & \mns \smfrac{1}{2} & 1 & \mns \smfrac{1}{2} & 0 & \ddots & \vdots\\ & \ddots & \ddots
& \ddots & \ddots & \ddots & \\ \vdots& & 0 & \mns \smfrac{1}{2} &1 & \mns \smfrac{1}{2}& 0 \\ & & & 0 & \mns
\smfrac{1}{2} &1 & \mns \smfrac{1}{2} \\ 0& & \cdots& & 0&
\mns \smfrac{1}{2} & \smfrac{1}{2} \\
\end{array}
\right).
\end{eqnarray*}
It can easily be seen that the history state (which in this basis is
simply the all ones vector) is a zero eigenstate of this
Hamiltonian. Finally we need to analyze the spectral gaps of all
convex combinations of these two. This has been done in Sec.~3.1.2
of \cite{ad1} and simplified with improved constants in
\cite{Ruskai:adimprove}; we refer the reader there for the details.
The result is that the spectral gap is at least $1/[2(K+1)^2]$, an
inverse polynomial in $n$ and $R$, which itself is an inverse
polynomial in the number of gates in the original circuit. This
proves Theorem \ref{thm:adiabatic}.

\section{1D QMA}
\label{sec:QMA}

The propagation Hamiltonian $\Hprop$ introduced in
Section~\ref{sec:adiabatic} is insufficient for $\QMA$-completeness.
Now we have a circuit $C_x$ which checks the witness for a $\QMA$
problem, taking as input the witness and some ancilla qubits in the
state $\ket{0}$. However, any correct history state for the circuit
$C_x$ will have zero eigenvalue for $\Hprop$, even if the $t=0$
component of the history state is not correctly initialized, or if
$C_x$ does not accept the witness. Even worse, $\Hprop$ also has
zero eigenvalue for any other state which is a uniform superposition
of states connected by the transition rules, even if the
superposition includes illegal configurations.

To solve these problems, we will introduce three new terms to the
overall Hamiltonian $H$:
\begin{equation*}
H = \Hprop + \Hinit + \Hfinal + \Hpenalty.
\end{equation*}
The initialization term $\Hinit$ will constrain the initial state of
the modified checking circuit $\Cmod{x}$ so that all ancilla qubits
are initialized to $\ket{0}$, and the final Hamiltonian $\Hfinal$
verifies that the checking circuit does accept the witness as input.
$\Hpenalty$ will penalize illegal configurations using local
constraints. As mentioned in the introduction, not all illegal
configurations can be penalized directly. Instead, some of them will
only be penalized because they evolve into a locally checkable
illegal configuration.

To create the initialization term, we will assume without loss of
generality that all the gates performed in the first block are the
identity. We will use the gate flag \gateflag\ to check that qubits
are correctly initialized instead of using it to do gates in the
first block. Then we get the following Hamiltonian term:
\begin{equation*}
\Hinit = \sum_i \ketbra{\mbox{\gateflag (1)}}{\mbox{\gateflag
(1)}}_i.
\end{equation*}
The sum is taken over ancilla qubits $i$ in their starting positions
in the first block of $n$ sites.  (The remaining qubits in the first
block of $n$ sites are used to encode the potential witness.)
$\Hinit$ creates an energy penalty for any ancilla to be in the
state $\ket{1}$ when the gate flag passes over it.  Since the gate
flag sweeps through the whole block, this ensures that the ancilla
qubits must be correctly initialized to $\ket{0}$ or the state
suffers an energy penalty.

Similarly, for the final Hamiltonian, we assume no gates are performed in the final block, and use \gateflag\
to check that the circuit accepts the output.  That is, we get the following term in the Hamiltonian:
\begin{equation*}
\Hfinal = \ketbra{\mbox{\gateflag (0)}}{\mbox{\gateflag (0)}}_{\rm
out}
\end{equation*}
This causes an energy penalty if the output qubit ``out'' is in the
state $\ket{0}$ when the gate flag sweeps over it.  In general, a
correct history state for some potential input witness state will
have, for the final block, a superposition of terms with the output
qubit in the state $\ket{0}$ and terms with the output qubit in the
state $\ket{1}$, and the energy penalty is thus proportional to the
probability that the circuit rejects the potential witness.

Now we move to $\Hpenalty$, which is a bit more involved.  We will
describe a set of local penalties that will enforce that each
illegal configuration will be penalized either directly or because
it will evolve into a configuration that will be penalized. We
forbid the following arrangements:
\begin{enumerate}

\item \unborn X, \unborn \block X (X is anything but \unborn ), X\dead,
X\block \dead\  (X is anything but \dead )

\item In the first block on the left: \unborn , \block \waitr . In the last block on the right: \dead , \waitl \block

\item \waitr X, \waitr \block X (X is anything but \waitr\ or \unborn),
X\waitl, X\block \waitl\ (X is anything but \waitl\ or \dead )

\item \dead \waitr, \dead \block \waitr , \waitl \unborn , \waitl \block \unborn , \dead
\unborn , \dead \block \unborn

\item \waitl \waitr , \waitl \block \waitr

\item Any adjacent pair of active sites (e.g., \gateflag \rflag\ or \lflag
\rflag), with or without block boundaries

\item \dead \gateflag, \gateflag \unborn,
\waitl \block \gateflag, \gateflag \block \waitr, \dead \block
\rflag, \rflag \block \unborn, \block \rflag, \rflag \block\ \ (but
the first two are OK with a block boundary and the next four are OK
without a block boundary)

\end{enumerate}
Note that the arrangements forbidden in group 7 are precisely those
missing a forward or backwards transition rule in the note following
Fact~\ref{fact:onefuture}.

We encode these rules into a penalty Hamiltonian in the
straightforward way:
\begin{equation*}
\Hpenalty = \sum_{XY} \ketbra{XY}{XY}_{i,i+1},
\end{equation*}
where the sum is taken over the forbidden arrangements XY listed
above for all adjacent pairs of sites (tailored appropriately to the
location of block boundaries and the first and last blocks).

\begin{claim}\label{cl:legal_non_detectable}
A configuration that satisfies the rules in groups 1 through 6 is of
one of the legal forms described in Section~\ref{sec:protocol}, or
it is in one of three cases: (i) a ({\bf qubits}) string of the form
\waitl \cdotiki \waitl \turn \waitr \cdotiki \waitr\ (such a
configuration is allowed in Section~\ref{sec:protocol} only when
\turn\ is all the way on one end of the ({\bf qubits}) string), (ii)
a ({\bf qubits}) string that has incorrect length: different from
$n$ when the active site is \gateflag\ or \rflag\ or different from
$n+1$ when the active site is \lflag\ or \turn, or (iii) a ({\bf
qubits}) string of length $n$, but with either an active site
\rflag\ and the ({\bf qubits}) string aligned with a block boundary,
or an active site \gateflag\ and the ({\bf qubits}) string {\em not}
aligned with a block boundary.
\end{claim}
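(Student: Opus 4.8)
The plan is to show that the penalties in groups 1 through 6 force any surviving configuration into the global ``sandwich'' shape --- a run of \dead\ states, then a $(\textbf{qubits})$ string containing exactly one active site, then a run of \unborn\ states --- and that the only freedom these local rules provably cannot remove is the length of the $(\textbf{qubits})$ string, its alignment against the block boundaries, and whether a \turn\ flag sits at an end or in the interior. I would build this shape from the outside in. Group 1, which forbids \unborn\ from being followed by anything other than \unborn\ and forbids \dead\ from being preceded by anything other than \dead, forces all \dead's into a prefix and all \unborn's into a suffix, so the chain is a prefix of \dead's, then a middle region $M$ free of \dead\ and \unborn, then a suffix of \unborn's. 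Group 3 (a \waitl\ may be preceded only by \waitl\ or \dead, and a \waitr\ followed only by \waitr\ or \unborn) then forces, inside $M$, all \waitl's into a prefix and all \waitr's into a suffix, so $M$ is a run of \waitl's, then a core $C$, then a run of \waitr's, where $C$ contains nothing but flags. Finally group 6 forbids any two adjacent active sites (with or without an intervening block boundary); since $C$ is all flags, a core of length at least two would contain an adjacent flag pair, so $C$ has length at most one.

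Next I would rule out the flagless case $C=\varnothing$, which is necessary because every legal form and every exceptional case contains exactly one active site. If $C$ is empty, group 5 forbids the \waitl\waitr\ junction, so at least one of the \waitl- and \waitr-runs is empty; groups 4 (forbidding \dead\waitr, \waitl\unborn, and \dead\unborn) then collapse the chain to all \unborn's or all \dead's, and group 2 (no \unborn\ in the first block, no \dead\ in the last block) rules out both. Hence $C$ is a single flag and the configuration has exactly one active site, with \waitl's on its left and \waitr's on its right. I would also record here that, once a single flag separates the \waitl-run from the \waitr-run, groups 4 and 5 become automatically satisfied, so they impose no further constraint on one-flag configurations beyond the shape already derived.

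The last step is a case analysis on the flag, parametrised by the length $\ell$ of the $(\textbf{qubits})$ string. The key geometric observation is that, since blocks have length exactly $n$, a string with $\ell=n+1$ cannot fit inside a single block and therefore \emph{automatically} crosses a block boundary, whereas a string with $\ell=n$ may either lie inside one block or straddle a boundary. Consequently, for \lflag\ and \turn\ (whose legal length is $n+1$) crossing is free and no misalignment is possible, while for \gateflag\ and \rflag\ (whose legal length is $n$) alignment is a genuine residual degree of freedom. Matching against Definition~\ref{def:legal}: \gateflag\ with $\ell=n$ inside a single block is legal form~1, with $\ell=n$ straddling a boundary it is case~(iii), and with $\ell\neq n$ it is case~(ii); \rflag\ with $\ell=n$ crossing a boundary is form~2, with $\ell=n$ aligned it is case~(iii), and otherwise case~(ii); \lflag\ with $\ell=n+1$ is form~3 and otherwise case~(ii); and \turn\ with $\ell=n+1$ at an end is form~4 or~5, in the interior it is case~(i), and otherwise case~(ii). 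This exhausts all possibilities.

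I expect the main obstacle to lie in this last step, precisely because the statement is so delicate: one must verify that groups 1 through 6 impose \emph{nothing} beyond the sandwich shape, so that an interior \turn\ (the arrangement \waitl \cdotiki \waitl \turn \waitr \cdotiki \waitr), a miscounted length, and a misaligned length-$n$ string each genuinely survive all six groups and hence truly require the escape cases (i)--(iii). The geometric fact that $\ell=n+1$ forces a crossing is what keeps the exception list from also needing a misalignment case for \lflag\ and \turn, and getting the block-boundary conditions at the two ends of the chain right (through group 2) is the fussy part most likely to hide an error.
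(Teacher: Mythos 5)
Your proposal is correct and follows essentially the same route as the paper's proof: group 1 yields the \dead\,--\,(\textbf{qubits})--\,\unborn\ sandwich, group 3 orders the \waitl's and \waitr's within the middle, groups 2, 4, 5 and 6 together force exactly one active site, and the exceptions then fall out of a comparison with Definition~\ref{def:legal}. Your final case analysis (in particular the observation that a string of length $n+1$ must cross a block boundary, so no alignment exception is needed for \lflag\ and \turn) is more explicit than the paper's one-line ``comparing with the definition,'' but it is the same argument.
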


Note that we do not need the rules in group 7 for this claim; they
are used later to deal with the illegal states that cannot be
checked locally.

\begin{proof}
The rules in group 1 enforce that the configuration is of the form
\dead \cdotiki \dead ({\bf qubits})\unborn \cdotiki \unborn\ by
ensuring that all unborn (\unborn) states are at the far right and
all dead (\dead) states are at the far left.  Group 2 ensures that
the particles are not all \unborn\ or all \dead\ (and in fact
ensures that there is at least one block of $n$ composed of
something else). Group 3 ensures that within the ({\bf qubits})
string, any \waitr\ qubits are on the right end and any \waitl\
qubits are on the left end. Groups 2 and 4 guarantee that the ({\bf
qubits}) string is not empty and does not consist of only \waitl\ or
\waitr\ qubits. Group 5 then checks that the ({\bf qubits}) string
does not directly jump from \waitl\ to \waitr, so there is at least
one active site in between, and the rules in group 6 ensure that
there cannot be more than one adjacent active site.  Since, by
groups 1 and 3, an active site can only occur to the right of all
\dead\ and \waitl\ sites and to the left of all \waitr\ and \unborn\
sites, all active sites are together in a group, so the rules in
groups 5 and 6 ensure that there is exactly one active site.
Comparing with Definition~\ref{def:legal}, we see that this leaves
only the exceptions listed in the claim.
\end{proof}

The three exceptions in the above claim cannot be ruled out directly
via a local check, since counting cannot be done using local
constraints.%
\footnote{If we are willing to add a $13$th state, we could split
the \turn\ state into two states, one to turn around on the left and
one to turn around on the right. Then we could rule out \waitl
\cdotiki \waitl \turn \waitr \cdotiki \waitr\ with local rules.
However, this strategy will not work for strings with the wrong
number of qubit states.}
We can only rule out the exceptions by considering both the penalty
Hamiltonian $\Hpenalty$ and the propagation Hamiltonian $\Hprop$.

In order to do so, we first break down the full Hilbert space into
subspaces which are invariant under both $\Hpenalty$ and $\Hprop$.
Let us consider the minimal sets of configurations such that the
sets are invariant under the action of the transition rules. This
defines a partition of the set of configurations. Given such a
minimal set $S$, we consider the subspace $\calK_S$ spanned by all
the configurations in $S$. The penalty Hamiltonian is diagonal in
the basis of configurations, so $\calK_S$ is invariant under
$\Hpenalty$, and the set $S$ is closed under the action of $\Hprop$,
so $\calK_S$ is invariant under $\Hprop$ as well. The space
$\calK_S$ belongs to one of three types:

\begin{enumerate}
\item All states in $S$ are legal.  (We call this subspace
$\calK_0$.)
\item All states in $S$ are illegal, but are all locally detectable, namely, none of them belong to the
exceptions of Claim \ref{cl:legal_non_detectable}.
\item All states in $S$ are illegal, but at least one of them is not locally detectable.
\end{enumerate}

Note that the set $S$ cannot contain both legal and illegal
configurations, since legal configurations do not evolve to illegal
ones, and vice-versa.  We want to show that subspaces of type 2 or 3
have large energy.  The main challenge will be to give a lower bound
on the energy of a space $\calK_S$ of type 3, which we will do by
lower bounding the fraction of configurations in $S$ that violate
one of the conditions 1-7 above. We want to prove:

\begin{lemma}[Clairvoyance lemma]
\label{lemma:clair} The minimum eigenvalue of $\Hprop+\Hpenalty$,
restricted to any $\calK_S$ of type $2$ or $3$, is $\Omega(1/K^3)$,
where $K$ is the number of steps in $\Cmod{x}$.
\end{lemma}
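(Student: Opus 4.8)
The plan is to treat each invariant subspace $\calK_S$ separately and reuse the two spectral ingredients already in play for Theorem~\ref{thm:adiabatic}: the propagation term $\Hprop$ behaves like a path Laplacian, while $\Hpenalty$ is diagonal and flags a subset of the configurations of $S$. By Fact~\ref{fact:onefuture} the transition graph on configurations with a single active site has in- and out-degree at most one, so each minimal invariant set $S$ is a simple path or a simple cycle. First I would rule out cycles by a monovariant: the number of \dead\ states is non-decreasing along forward transitions and strictly increases whenever \lflag\ turns around on the left (via \turn\waitr $\to$ \dead\rflag), so a return to the starting configuration is impossible. Hence $S = (c_0, \dots, c_L)$ is a path, and exactly as in Section~\ref{sec:adiabatic}, after absorbing the gate unitaries into a local change of basis, $\Hprop|_{\calK_S}$ is unitarily equivalent to the Laplacian of a path on $L+1$ vertices: its unique ground state is the uniform superposition $\ket{\eta}$ with eigenvalue $0$, and its spectral gap is at least $c/L^2$.

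For a subspace of type~$2$ every configuration of $S$ is locally detectable, so every diagonal entry of $\Hpenalty|_{\calK_S}$ is at least $1$; since $\Hprop \succeq 0$ this already gives $\lambda_{\min} \ge 1 = \Omega(1/K^3)$. The substance is type~$3$. Here I let $\Pi_{\rm det}$ be the diagonal projector onto the detectable configurations of $S$, so that $\Hpenalty|_{\calK_S} \ge \Pi_{\rm det}$, and apply the geometric lemma used in~\cite{ad1,Ruskai:adimprove} to the pair $(\Hprop, \Pi_{\rm det})$. Writing $m$ for the number of detectable configurations in $S$, the overlap of the propagation ground state with the undetected subspace is $\|\Pi_{\rm det}\ket{\eta}\|^2 = m/(L+1)$, whence the lemma yields $\lambda_{\min}(\Hprop + \Pi_{\rm det}) = \Omega\big(\tfrac{1}{L^2}\cdot\tfrac{m}{L+1}\big) = \Omega(m/L^3)$.

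Everything then reduces to two combinatorial facts about a type-$3$ path: that $m \ge 1$, and that $L = O(K)$. The first is immediate, since the endpoints $c_0$ and $c_L$ have no backward, resp.\ forward, transition, and are therefore exactly the group-$7$ arrangements, which are detectable. The second, the real content of the clairvoyance lemma, is where I expect the work to lie. The key invariant is that the number of qubit states (\waitl, \waitr, \gateflag, \rflag) is conserved by every transition rule, so all of $S$ shares one qubit count $q$. I would then argue, case by case over the three exceptions of Claim~\ref{cl:legal_non_detectable}, that any such configuration is doomed to reach a locally detectable \emph{dead-end} (a group-$7$ arrangement with no forward transition) within $O(K)$ forward steps: a wrong count, or a wrong alignment of the qubit string against the fixed block boundaries, forces the active site, within at most $n$ sweep cycles, to meet a block boundary where it expects an \unborn\ (triggering \gateflag\unborn\ or \rflag\block\unborn) or a qubit where it expects to turn. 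Since the \dead\ count rises once per $O(q)$-length sweep cycle and the process halts after at most $O(n)$ cycles, the path has length $L = O(nq) = O(K)$.

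Combining $m \ge 1$ with $L = O(K)$ gives $\lambda_{\min}(\Hprop + \Hpenalty)|_{\calK_S} = \Omega(1/K^3)$ on every type-$2$ or type-$3$ subspace, as claimed. The main obstacle is squarely this length/detectability bound: one must show, uniformly over the three exception types and all block alignments, that a non-detectable illegal configuration cannot survive more than $O(K)$ steps before the misalignment it carries produces a pair forbidden by the local rules. This is the step with no classical analogue, forced on us because the counting obstruction prevents a static local check; getting the exponent right (so that the bound is $1/K^3$ rather than a worse power of $K$) relies on the fact that the terminating dead-end is reached before the qubit string can drift a full block width.
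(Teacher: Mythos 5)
Your proposal follows essentially the same route as the paper's proof: type-2 subspaces are disposed of by $\Hpenalty$ alone, and for type-3 subspaces one combines Kitaev's geometric lemma (with $\Hprop|_{\calK_S}$ a path Laplacian of spectral gap $\Omega(1/L^2)$ and the diagonal penalty projector) with a combinatorial bound showing that an $\Omega(1/K)$ fraction of each minimal invariant set is locally detectable. Your split into ``$m\ge 1$'' plus ``$L=O(K)$'' is the paper's three-case analysis in different packaging, and the mechanism you sketch for the hard step --- a miscounted or misaligned qubit string forces the active site into a group-7 arrangement within $O(n)$ sweep cycles, each of length $O(q)$ --- is exactly the paper's. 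One small correction: a single-active-site configuration with no forward transition is \emph{not} necessarily a group-7 arrangement, so ``$m\ge 1$'' is not immediate from the endpoints. Such a dead end can instead be detectable via groups 1--6 (e.g.\ \turn \waitl, forbidden by group 3), or it can be an \emph{undetectable} ``final-looking'' configuration --- \gateflag\ flush against the right end of the whole chain atop a too-long qubit string, which violates no local rule. In that case $m\ge1$ follows not from the endpoint itself but from your own dynamical analysis: the backward evolution of that configuration terminates at \waitl \block \gateflag, which is group 7. This is a repairable imprecision that does not change the structure or the conclusion of your argument.
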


\begin{proof}
To prove the theorem, we deal separately with each type of invariant
subspace. Type $2$ is straightforward: the full subspace $\calK_S$
has energy of at least $1$, due to $\Hpenalty$.

We now focus on type $3$. We observe several properties of type $3$
subspaces. First, notice that the number of qubit states (\waitl,
\waitr, \gateflag, and \rflag) and the number of active sites
(\turn, \gateflag, \rflag, and \lflag) are conserved under all
transition rules. Since $S$ is a minimal set preserved by the
transition rules, all the configurations in $S$ therefore contain
the same number of qubit states. Since $S$ is of type $3$, it must
contain an undetectable illegal configuration, such as a ({\bf
qubits}) string of the wrong length, and by
Claim~\ref{cl:legal_non_detectable} has exactly one active site
somewhere in the string. %
By Fact~\ref{fact:onefuture}, there is at most one forwards
transition rule and one backwards transition rule that applies to
each configuration in $S$. Note also that configurations with one
active site which is \lflag\ or \turn\ always have both a forward
and backwards transition rule available.  Therefore, if $S$ is of
type $3$ it must contain at least one configuration with a \rflag\
or \gateflag\ active site.

Going further, we claim that a fraction of at least $\Omega(1/n^2
R)$ of the configurations in $S$ violate one of the rules in groups
1 through 7 above. To see this, we divide into three cases:
\begin{itemize}
\item $S$ contains a configuration with a qubit string of the form \waitl \cdotiki \waitl \turn \waitr
\cdotiki \waitr: In this case, the forward transition rule that
applies to the illegal configuration is either \turn \waitr\
$\rightarrow$ \dead \rflag\ or \turn \block \waitr\ $\rightarrow$
\dead \block \gateflag.  This gives us a configuration which can be
locally seen as illegal, such as \waitl \cdotiki \waitl \dead \rflag
\waitr \cdotiki \waitr. This violates group 1, as \dead\ sites
should be to the left of all other kinds of sites. Therefore $S$
contains at least $1/2$ locally checkable illegal configurations. In
fact, it is a much larger fraction, as all the backwards transition
rules and further forward transition rules will also produce locally
checkable illegal states.

\item $S$ contains a configuration with a \gateflag\ active site:
Via forward transitions, the \gateflag\ will move to the right until
it hits either the right end of the ({\bf qubits}) string or until
it hits a block boundary.  If it does not encounter both together,
it has reached one of the illegal configurations forbidden in group
7.  Via backwards transitions, the \gateflag\ moves to the left
until it hits either the left end of the ({\bf qubits}) string or
until it hits a block boundary.  Again, if it does not encounter
both together, we must now be in one of the configurations forbidden
in group 7.  If the right and left ends of the ({\bf qubits}) string
are both lined up with block boundaries, and there are no other
block boundaries in between, then the ({\bf qubits}) string contains
exactly $n$ sites and is correctly aligned with the block boundary.
Therefore it is actually a legal state, which cannot occur in $S$ of
type $3$.  Thus, we must have one of the locally checkable illegal
states, and it must occur after at most $n$ \gateflag\ transitions
(in either direction).  The \gateflag\ configurations themselves,
however, could be part of a larger cycle with transitions involving
\lflag, \rflag, and \turn\ active sites.  However, the length of the
({\bf qubits}) string is at most $nR$, and we can have at most $n$
\lflag-\turn-\rflag-\turn\ cycles before getting a \gateflag, so the
locally checkable illegal states comprise a fraction at least
$\Omega(1/n^2 R)$ of all the states in $S$.

\item $S$ contains no configuration with a \gateflag\ active site:
$S$ still must contain configurations with \rflag.  These
configurations can form part of a normal cycle, but the normal cycle
eventually leads to a transition to a \gateflag\ configuration.  For
forward transitions, this occurs when the left end of the ({\bf
qubits}) string is aligned with a block boundary, and for backwards
transitions, it occurs when the right end of the ({\bf qubits})
string is aligned with a block boundary.  The only way to avoid this
happening is for $S$ to have a \rflag\ configuration with no forward
transition and another with no backwards transition.  Since the only
such configurations contain one of the arrangements \dead \block
\rflag, \rflag \block \unborn, \block \rflag, or \rflag \block, all
of which are forbidden by the rules in group 7, $S$ must contain a
locally checkable illegal configuration.  Again, the length of the
({\bf qubits}) string is at most $nR$, and we can have at most $n$
\lflag-\turn-\rflag-\turn\ cycles in a row, so the locally checkable
illegal states comprise a fraction at least $\Omega(1/n^2 R)$ of all
the states in $S$.

\end{itemize}

We can now invoke Lemma~$14.4$ from \cite{Kitaev:book} to lower
bound the energy of the overall Hamiltonian for a type $3$ subspace:

\begin{lemma}
\label{lemma:Kitaev} Let $A_1$, $A_2$ be nonnegative operators, and
$L_1$, $L_2$ their null subspaces, where $L_1 \cap L_2 = \{0\}$.
Suppose further that no nonzero eigenvalue of $A_1$ or $A_2$ is
smaller than $v$. Then
\begin{equation*}
A_1 + A_2 \geq v \cdot 2 \sin^2 {\theta/2},
\end{equation*}
where $\theta = \theta (L_1, L_2)$ is the angle between $L_1$ and
$L_2$.
\end{lemma}

In our case, $A_1$ is the propagation Hamiltonian $\Hprop$, and its
null subspace, restricted to $K_S$, consists of equal superpositions
over all configurations in the invariant subspace $S$. (There are
multiple such states, with different values of the encoded qubit
states.)  $A_2$ is the penalty Hamiltonian $\Hpenalty$, diagonal in
the basis of configurations. Then $\sin^2 \theta$ is the projection
(squared) of the superposition of all shapes on the subspace of
locally checkable illegal configurations; that is, it is the
fraction of locally checkable illegal configurations in the
invariant set.  The minimum nonzero eigenvalue of $\Hpenalty$ is
$1$, but (as in~\cite{Kitaev:book}) the minimum nonzero eigenvalue
of $\Hprop$ is $\Omega(1/K^2)$, where $K = n(2n+3)(R-1)+n-1$ is the
number of steps in $\Cmod{x}$. Thus, if $S$ is a set containing a
configuration which is illegal but cannot be locally checked, all
states in $K_S$ have an energy at least $\Omega(1/K^3)$.
\end{proof}

We can now prove Theorem \ref{thm:qma}. We start by assuming the
circuit $C_x$ accepts correct witnesses and rejects incorrect
witnesses with a probability exponentially close to $1$. This can be
achieved, for instance, by checking multiple copies of the witness.
Then we will show that if there exists a witness which is accepted
by $C_x$ with probability at least $1-O(1/K^{3})$, then there is a
state with energy at most $O(1/K^{4})$, whereas if all possible
witnesses are only accepted by $C_x$ with probability at most
$1/K^{3}$, then all states have energy at least $\Omega(1/K^3)$.

We already know that on subspaces of type $2$ and $3$, the minimum
eigenvalue of $\Hprop+\Hpenalty$ is $\Omega(1/K^3)$.  We therefore
restrict attention to the subspace $\calK_0$ of type $1$, built from
only legal configurations, and can hence ignore $\Hpenalty$.  From
this point on, the proof follows~\cite{Kitaev:book}, but we include
it for completeness.

The type $1$ space $\calK_0$ contains only valid history states.
However, some have incorrect ancilla inputs and are penalized by
$\Hinit$, while others include an incorrect witness input, and are
penalized by $\Hfinal$. Only history states corresponding to a
correct witness have low energy.

If there exists a witness which is accepted by $C_x$ with
probability at least $1-1/K^{3}$, then the history state $\ket{\Phi}
= \frac{1}{\sqrt{K+1}}\sum_t \ket{\phi_t}$ for that witness will have small enough
energy:
\begin{equation*}
\bra{\Phi}\Hprop + \Hpenalty + \Hinit \ket{\Phi} = 0,
\end{equation*}
so we only need to calculate $\bra{\Phi} \Hfinal \ket{\Phi}$. The
only term in $\ket{\Phi}$ that contributes is the one with
\gateflag\ on the output qubit in the last block, which has a
coefficient $1/\sqrt{K+1}$.  Therefore, the energy due to $\Hfinal$
is $1/(K+1)$ times the probability that the witness is rejected by
$C_x$. That is, it is $O(1/K^{4})$.

Now let us consider what happens for a ``no'' instance.  The
following lemma will complete the proof of Theorem \ref{thm:qma}:

\begin{lemma}
\label{lemma:noinstance}
If there is no witness that causes the circuit to accept with
probability larger than $1/K^{3}$, then every state in $\calK_0$ has
energy at least $\Omega(1/K^3)$.
\end{lemma}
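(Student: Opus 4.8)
The plan is to follow Kitaev's strategy: since every configuration in a type-$1$ set is legal, we may drop $\Hpenalty$ and lower-bound the smallest eigenvalue of $\Hprop+\Hinit+\Hfinal$ on $\calK_0$. I would apply the geometric lemma (Lemma~\ref{lemma:Kitaev}) with $A_1=\Hprop+\Hinit$ and $A_2=\Hfinal$, grouping $\Hfinal$ as a \emph{single} projector, which is what keeps the final bound clean (lumping $\Hinit$ with $\Hfinal$ instead would force a large $\lambda_{\max}$ into the angle estimate and lose powers of $n$ and $K$). Recall that a general vector of $\calK_0$ has the form $\sum_t\ket{\psi_t}\ket{\gamma(t)}$ with arbitrary data $\ket{\psi_t}$ on the qubit sites.

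First I would analyze $A_1$. Conjugating by the unwinding unitary $W=\sum_t (U_t\cdots U_1)^\dagger\otimes\ketbra{\gamma(t)}{\gamma(t)}$ turns $\Hprop|_{\calK_0}$ into $\tfrac12 L\otimes I$, where $L$ is the path-graph Laplacian on the $K+1$ time slices (exactly the tridiagonal matrix displayed in Section~\ref{sec:adiabatic}); its kernel is the uniform ``history'' vector and its spectral gap is $2\sin^2\big(\tfrac{\pi}{2(K+1)}\big)=\Omega(1/K^2)$. Because the first round consists only of identity gates, $\Hinit$ is unaffected by $W$ and becomes diagonal, projecting for each ancilla $i$ onto ``ancilla $i$ in state $\ket1$'' at the single slice $t_i$ where $\gateflag$ passes over it. Splitting the data register into the correctly-initialized sector and its complement, the kernel of $A_1$ is exactly the correctly-initialized history states; on the correct sector the nonzero spectrum is that of $\tfrac12 L$, while on each wrong sector $A_1$ acts as $\tfrac12 L$ plus a nonzero penalty supported on $O(n)$ first-block sites.

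The technical heart, and the step I expect to be the main obstacle, is to show that this perturbed Laplacian still has all eigenvalues $\Omega(1/K^2)$, so that the minimum nonzero eigenvalue $v$ of $A_1$ is $\Omega(1/K^2)$ (and $A_2=\Hfinal$ trivially has minimum nonzero eigenvalue $1$). Naive first-order perturbation theory fails here, since the penalty has size $1$, far larger than the $\Omega(1/K^2)$ gap of $\tfrac12 L$, so I would argue variationally. Using $\sum_i\ketbra{t_i}{t_i}\ge\ketbra{t_*}{t_*}$ for a single penalized site $t_*$, suppose $\tfrac12\sum_t|\psi_{t+1}-\psi_t|^2+|\psi_{t_*}|^2=\lambda$ for a unit vector $\psi$; then Cauchy--Schwarz gives $|\psi_t|\le\sqrt{\lambda}\,\big(1+\sqrt{2|t-t_*|}\big)$, and summing $|\psi_t|^2$ over all $t$ forces $1\le\lambda\cdot O(K^2)$, i.e. $\lambda=\Omega(1/K^2)$. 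Hence $v=\Omega(1/K^2)$.

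It remains to bound the angle $\theta(L_1,L_2)$ with $L_1=\ker A_1$ and $L_2=\ker\Hfinal$. For a no-instance, $L_1\cap L_2=\{0\}$: a nonzero vector in the intersection would be a correctly-initialized history state annihilated by $\Hfinal$, i.e. a witness accepted with probability $1>1/K^3$, contradicting the hypothesis. Since $\Hfinal$ is a projector, for every correctly-initialized history state $\ket{\Phi_w}$ one has $\|P_{L_2}\ket{\Phi_w}\|^2=1-\bra{\Phi_w}\Hfinal\ket{\Phi_w}=1-\tfrac{1}{K+1}p_{\mathrm{rej}}(w)\le 1-\tfrac{1}{K+1}(1-1/K^3)$, using that every witness is rejected with probability at least $1-1/K^3$. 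Thus $\cos^2\theta\le 1-\Omega(1/K)$, so $2\sin^2(\theta/2)=1-\cos\theta\ge\tfrac12(1-\cos^2\theta)=\Omega(1/K)$. Plugging into Lemma~\ref{lemma:Kitaev} yields $\Hprop+\Hinit+\Hfinal\ge v\cdot 2\sin^2(\theta/2)=\Omega(1/K^2)\cdot\Omega(1/K)=\Omega(1/K^3)$ on $\calK_0$, which is the claim.
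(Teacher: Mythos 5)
Your proof is correct, but it takes a genuinely different route from the paper's. The paper applies Lemma~\ref{lemma:Kitaev} with $A_1=\Hprop$ and $A_2=\Hinit+\Hfinal$; since $\Hinit+\Hfinal$ is a sum of commuting projectors its minimum nonzero eigenvalue is still $1$, so $v=\Omega(1/K^2)$ is set by $\Hprop$ alone, and the work goes into bounding $\bra{\psi}\Pi_{\rm init}\Pi_{\rm final}\ket{\psi}$ for an \emph{arbitrary} history state $\ket{\psi}$. That forces the decomposition $\ket{\phi_0}=\alpha\ket{\phi_0^v}+\beta\ket{\phi_0^i}$ and a separate estimate of the cross term $\bra{\psi^v}\Pi_{\rm final}\ket{\psi^i}=O(K^{-5/2})$, which is the fiddly part of the paper's argument. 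You instead absorb $\Hinit$ into $A_1$, which buys you a much cleaner angle computation ($L_1$ is now only the correctly-initialized histories, so you just evaluate the diagonal expectation of $\Hfinal$, with no cross terms), at the price of having to prove that the path Laplacian perturbed by a rank-one time-site penalty still has all eigenvalues $\Omega(1/K^2)$ on the incorrectly-initialized sectors. Your variational Cauchy--Schwarz argument for that perturbed-Laplacian bound is sound (and is essentially the standard estimate used elsewhere in this literature), and the block-diagonality of the unwound $\Hinit$ in the ancilla basis, which you need for the sector decomposition, does hold because the first round consists of identity gates. Both routes land on the same $\Omega(1/K^2)\cdot\Omega(1/K)=\Omega(1/K^3)$. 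One small correction: your parenthetical claim that lumping $\Hinit$ with $\Hfinal$ ``would force a large $\lambda_{\max}$ into the angle estimate and lose powers of $n$ and $K$'' is not right --- that is exactly the grouping the paper uses, and it loses nothing; the trade-off is only between the cross-term bookkeeping there and the perturbed-gap estimate here.
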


\begin{proof}
We will invoke Lemma~\ref{lemma:Kitaev} once more, with $A_1 =
\Hprop$, and $A_2 = \Hinit + \Hfinal$. From now on, all the
operators we discuss will be restricted to $\calK_0$. The null
subspace of the propagation Hamiltonian $\Hprop$ contains states
which correspond to a history $\ket{\psi} =
\frac{1}{\sqrt{K+1}}\sum_t \ket{\phi_t}$ of the circuit $\Cmod{x}$
for some input state $\ket{\phi_0}$, not necessarily correct. Fix
such a history state $\ket{\psi}$. We will upper bound its
projection squared on the null space of $A_2$.

Let us first study the structure of this null space. The null space
of $\Hinit$ is spanned by states for which either the \gateflag\ is
not on an ancilla qubit in the first block or the \gateflag\ is on
an ancilla qubit, but the ancilla qubit is a $\ket{0}$. Let
$\Pi_{\rm init}$ be the projection on the null space of $\Hinit$.
The null space of $\Hfinal$ contains states for which either the
\gateflag\ is not on the output qubit in the last block or the
\gateflag\ is on the output qubit and the output qubit is a
$\ket{1}$. Let $\Pi_{\rm final}$ be the projection on the null space
of $\Hfinal$. The two projectors commute, so the projector onto the
null subspace of $\Hinit + \Hfinal$ is $\Pi_{\rm init} \Pi_{\rm
final}$. We would thus like to upper bound
\[\bra{\psi} \Pi_{\rm init} \Pi_{\rm final} \ket{\psi}. \]

To this end, we expand the input state $\ket{\phi_0}$ for
$\ket{\psi}$ into $\ket{\phi_0} = \alpha \ket{\phi^{v}_0} + \beta
\ket{\phi^{i}_0}$, where $\ket{\phi^{v}_0}$ is a state with all
valid ancilla input qubits and $\ket{\phi^{i}_0}$ is an orthogonal
state consisting of a superposition of states with one or more
invalid ancilla input qubits. Note that $\ket{\psi} = \alpha
\ket{\psi^v} + \beta \ket{\psi^i}$, where $\ket{\psi^v}$ is the
history state for input $\ket{\phi^{v}_0}$ and $\ket{\psi^i}$ is the
history state for input $\ket{\phi^{i}_0}$. We have:
\begin{equation*}
\bra{\psi} \Pi_{\rm init} \Pi_{\rm final} \ket{\psi} = |\alpha|^2
\bra{\psi^v} \Pi_{\rm init}\Pi_{\rm final} \ket{\psi^v} +
|\beta|^2\bra{\psi^i} \Pi_{\rm init} \Pi_{\rm final} \ket{\psi^i} +
2 {\rm Re} \left(\alpha^* \beta \bra{\psi^v} \Pi_{\rm init}\Pi_{\rm
final} \ket{\psi^i} \right),
\end{equation*}
where we have used the fact that the two projectors commute, so
$\Pi_{\rm init}\Pi_{\rm final}$ is Hermitian.

We now bound each of the three terms. For the first term, note that
$\bra{\psi^v} \Pi_{\rm init} = \bra{\psi^v}$. We might as well
assume the output qubit is the very last qubit, so the \gateflag\
reaches it only at $t=K$, and we have that
\begin{equation*}
\bra{\psi^v} \Pi_{\rm final} \ket{\psi^v} = \frac{1}{K+1} \left[ K +
\bra{\phi^v_K} \Pi_{\rm final} \ket{\phi^v_K} \right] = \frac{K +
p}{K+1},
\end{equation*}
where $p$ is the probability that the circuit $C_x$ accepts
$\ket{\phi^v_0}$. We thus find that the first term is at most
$|\alpha|^2 (K + O(1/K^3))/(K+1)$.

For the second term, it suffices to upper bound $\bra{\psi^i}
\Pi_{\rm init} \ket{\psi^i}$ since
\begin{equation*}
\bra{\psi^i} \Pi_{\rm init} \ket{\psi^i} = \| \Pi_{\rm init}
\ket{\psi^i}\|^2 \ge \|\Pi_{\rm final} \Pi_{\rm init}
\ket{\psi^i}\|^2 = \bra{\psi^i} \Pi_{\rm init} \Pi_{\rm final}
\ket{\psi^i},
\end{equation*}
again using the fact that the two projectors commute. Since the
computation $\Cmod{x}$ begins by sweeping a \gateflag\ through the
first $n$ sites, and thereafter never has \gateflag\ within the
first $n$ sites, only the first $n$ states $\ket{\phi_t}$ can
possibly be outside the null space of $\Hinit$.  Also, note that
$\bra{\phi_t} \Pi_{\rm init} \ket{\phi_{t'}} = 0$ unless $t = t'$.
We can thus write
\begin{equation*}
\bra{\psi^i} \Pi_{\rm init} \ket{\psi^i} = \frac{1}{K+1}
\left[(K+1-n) + \sum_{t=0}^{n-1} \bra{\phi^i_t} \Pi_{\rm init}
\ket{\phi^i_t} \right].
\end{equation*}
We have $\bra{\phi_t} \Pi_{\rm init} \ket{\phi_t} = 0$ if the
$t^{\rm th}$ qubit of the input is an ancilla qubit and is in the
state $\ket{1}$, and $\bra{\phi_t} \Pi_{\rm init} \ket{\phi_t} = 1$
otherwise. Since the $\ket{\phi^i_t}$ states are superpositions of
terms with at least one incorrect ancilla input qubit, we have
\begin{equation*}
\bra{\psi^i} \Pi_{\rm init} \ket{\psi^i} \leq K/(K+1).
\end{equation*}

For the third term, we have
\begin{eqnarray*}
|{\rm Re} \left(\alpha^* \beta \bra{\psi^v} \Pi_{\rm init}\Pi_{\rm
final} \ket{\psi^i}\right)| \le & |\bra{\psi^v} \Pi_{\rm init}
\Pi_{\rm final} \ket{\psi^i}| \\
= & |\bra{\psi^v}\Pi_{\rm final} \ket{\psi^i}|.
\end{eqnarray*}
Note, however, that $\braket{\phi^v_t}{\phi^i_t} = 0$, since
$\ket{\phi_t}$ is related to $\ket{\phi_0}$ by a unitary
transformation and $\braket{\phi^v_0}{\phi^i_0} = 0$.  This gives us
\begin{eqnarray*}
|\bra{\psi^v} \Pi_{\rm final} \ket{\psi^i}| = & \frac{1}{K+1}
|\bra{\phi^v_K} \Pi_{\rm final} \ket{\phi^i_K}| \\
\leq & \frac{1}{K+1} \| \Pi_{\rm final} \ket{\phi^v_K} \| \\
= & O(K^{-5/2}),
\end{eqnarray*}
The above expression might not be $0$ since both $\ket{\psi^v}$ and
$\ket{\psi^i}$ might have a non-zero component which is accepted by
the circuit, but since $\ket{\phi^v}$ is only accepted rarely (with
probability $O(1/K^3)$), it cannot be too large.

Summing up all contributions, we have:
\begin{equation*}
\bra{\psi} \Pi_{\rm init} \Pi_{\rm final} \ket{\psi} \leq |\alpha|^2
\frac{K + O(1/K^3)}{K+1} + |\beta^2| \frac{K}{K+1} + O(K^{-5/2}) =
\frac{K}{K+1} + O(K^{-5/2}).
\end{equation*}
From this we can conclude that $\sin^2(\theta)$ is at least
$\Omega(1/K)$, where $\theta$ is the angle between the null spaces
of $\Hprop$ and $\Hinit + \Hfinal$.  Lemma~\ref{lemma:Kitaev} then
tells us that every state has energy $\Omega(1/K^3)$.
\end{proof}

\subsection{Non-degenerate Universal Adiabatic Hamiltonian}
\label{sec:nondegenerate}

If we wish to create a universal adiabatic quantum computer with a
non-degenerate ground state, the Hamiltonians from Section~\ref{sec:adiabatic}
 do not suffice. In fact, even if we used a $12$-state construction with
the final Hamiltonian $H_1$ equal to $H$ from the $\QMA$-completeness
 construction above, then $H_1$ would be non-degenerate, but the
 initial Hamiltonian $H_0$ in Section~\ref{sec:adiabatic} still is not.
We sketch here a construction that adds an additional
$13$th state to make sure that the Hamiltonian remains non-degenerate
 throughout the adiabatic evolution.

The new state is a start state \start, which serves as the state for the input qubits in the initial configuration. We will add a
set of terms to $H_0$ and $H_1$ (which also includes $\Hinit$, $\Hfinal$, and $\Hpenalty$, in addition to $\Hprop$). These terms will guarantee that the unique ground state of $H_0$ is \block
\gateflag \start \cdotiki \start \block \unborn \cdotiki \unborn \block \cdotiki \block \unborn \cdotiki
\unborn \block. Furthermore, $H_0$ will have the property that $H_0 \ket{\gamma(t)} = 1$ for any $t > 0$, which allows us to use the calculation of the gap given in Section~\ref{sec:adiabatic}.
To do this, we add a term that enforces the condition that if the first particle is in state \gateflag, then
the second particle is in state \start. This is accomplished by forbidding any configuration that has
\gateflag\ in the first location and any state except \start\ in the second. For example, to forbid states
\gateflag\unborn\ in particles 1 and 2, we add the term $\ket{\gateflag\unborn}\bra{\gateflag\unborn}_{12}$.
Similarly, for $i = 2, \ldots, n-1$, we enforce that if particle $i$ is in state \start\ then particle $i+1$
must be in state \start\ as well. Then we add a term that says if particle $n$ is in state \start\, then
particle $n+1$ is in state \unborn. For $i > n$, add terms that enforce that if particle $i$ is in
state \unborn, then so is particle $i+1$. Similar terms should also be added to $H_1$ to assure that its ground state remains non-degenerate when the new start state is added.  Finally, we need to slightly alter the transition rules used to build $\Hprop$. For $i = 1, \ldots, n-1$, replace the rule \gateflag \waitr\ $\rightarrow$ \waitl
\gateflag\ in locations $i$ and $i+1$ with \gateflag \start\ $\rightarrow$ \waitl \gateflag(0). With this
change, Fact \ref{fact:onefuture}  still holds, with the small caveat that valid configurations now must have any qubits in the first block set to $\ket{0}$. Note that none of the configurations in the subspace spanned
by the $\ket{\gamma(t)}$ violate any of the extra conditions. Furthermore, the additional terms guarantee the
uniqueness of the ground state of $H_0$ given above. In order to establish the non-degeneracy of any of the
Hamiltonians $H(s)$, we need to prove that there are no low energy states outside of the subspace spanned by
the $\ket{\gamma(t)}$, which follows from the Clairvoyance lemma and a simplified version of Lemma~\ref{lemma:noinstance}.

\section{Discussion and Open Problems}\label{sec:conclude}

We have shown that \rDHam{1}{12} Hamiltonians can be used for both
universal adiabatic quantum computation and to produce very
difficult, $\QMA$-complete problems.

A similar result holds also for a related quantum
complexity class, called $\QCMA$, which is the subclass of $\QMA$
where the witnesses are restricted to be classical.
To see this, observe first that our reduction from an
arbitrary $\QMA$ language to the \rDHAM{1}{12} problem is
witness-preserving, at least once the acceptance probability has
been amplified. Given a witness for the original $\QMA$ language, we
can, in fact, efficiently construct on a quantum computer a witness
for the corresponding instance of \rDHAM{1}{12} using the adiabatic
algorithm. This implies that if the witness for the original problem
is efficiently constructible (which means we may as well assume it
is a classical bit string describing the circuit used to construct
the quantum witness), then the witness for \rDHAM{1}{12} is also
efficiently constructible. Thus, we have also shown that the
sub-language of \rDHAM{1}{12} which has the additional promise of an
efficiently-constructible ground state is complete for $\QCMA$.

There remain many interesting related open problems.  Clearly, it is
interesting to ask whether the size of the individual particles in
the line can be further decreased, perhaps as far as qubits.  We
could, of course, interpret our $12$-state particles as sets of $4$
qubits, in which case our Hamiltonian becomes $8$-local, with the
sets of $8$ interacting qubits arranged consecutively on a line. The
perturbation-theory gadgets used to convert $8$-local interactions
to $2$-local ones do not work in a one-dimensional system: The pairs
of interacting qubits form a graph, which needs to have degree at
least $3$ (or $4$ for some of the gadgets used
by~\cite{Oliveira:05a}).  If we do apply the approach of
\cite{Oliveira:05a}, however, to the system described here, we get a
\rDqubit{2} Hamiltonian on a strip of qubits of constant width. The
constant is rather large, but this still constitutes an improvement
over the result of \cite{Oliveira:05a}.

Another approach to decreasing the dimensionality of the individual
particles is to find a new protocol for removing the explicit
reference to time from the circuit $C_x$. It seems likely that some
further improvement is possible in this regard, but it is unlikely
that an improved protocol by itself can take us all the way to
qubits, as we need additional states to provide the control
instructions.  If it is possible to have universal adiabatic quantum
computation and \QMA-completeness with \rDHam{1}{2} Hamiltonians,
we will probably need new techniques to prove it.  It may also be that there
is a transition at some intermediate number of states between $2$ and
$12$ for which universality and \QMA-completeness become possible.
This would be analogous to the classical $2$-dimensional Ising spin
problem without magnetic field, which is in P for a single plane of
bits, but is \NP-complete when we have two layers of
bits~\cite{barahona}.

Another interesting line of open questions is to investigate the
energy gap.  There are two energy gaps of relevance, both
interesting.  One is the ``promise gap'' $\Delta$ in the definition
of \QMA.  We have shown that we have \QMA-completeness when $\Delta$
is polynomially small relative to the energy per term. This can
easily be improved to a constant value of $\Delta$ by amplification:
$t$ copies of the ground state will either have energy less than
$tE$ or above $t(E+\Delta)$, amplifying the promise gap to
$t\Delta$.  A more interesting question is whether $\Delta$ can be
made a constant fraction of the total energy available in the
problem, the {\em largest} eigenvalue of $H$; if so, that would
constitute a quantum version of the PCP theorem. Hastings and
Terhal~\cite{HT} have argued that it is not possible
to do this in any constant number of dimensions unless $\QMA =
\Pclass$: To approximate the ground state energy, we can
divide the system into blocks of a constant size, and diagonalize
the Hamiltonian within each block. We can then consider the tensor
product of the ground states for each separate block; since such a
state ignores the energy due to Hamiltonian terms interacting
different blocks, it will not be the true ground state, but its
energy can differ from the ground state energy by at most the
surface area of each block times the number of blocks (normalizing
the maximum energy per term to be $1$), whereas the maximum
eigenvalue of $H$ is roughly proportional to the {\em volume} of each
block times the number of blocks. Thus, the tensor product state approximates
the ground state energy up to a constant fraction of the total
energy, and the fraction can be made arbitrarily small by increasing
the size of each block.  This gives a polynomial-time classical
algorithm for approximating the ground state energy to this
accuracy.  When the interactions in $H$ are not constrained by
dimensionality, the argument breaks down, so this quantum version of
PCP remains an interesting open problem for general \kdlocal{k}{r}
Hamiltonians.

We can also look at the spectral gap, the gap between the ground
state and the first excited state of $H$.  For adiabatic
computation, we are interested in the minimal spectral gap over the
course of the computation.  We have shown that universal adiabatic
quantum computation is possible if the spectral gap is polynomially
small relative to the energy per term. What happens if it is bounded
below by a constant? Hastings has recently shown~\cite{hastingsun}
that it is possible to efficiently classically simulate the
adiabatic evolution of a \rDHam{1}{r} Hamiltonian system with
constant spectral gap.  We should therefore not expect to be able to
perform universal adiabatic quantum computation with such
Hamiltonians, as that would imply $\BQP = \Pclass$.  Hastings'
argument builds on his earlier paper~\cite{hastings}, which showed
that the ground state of a gapped \rDHam{1}{r} Hamiltonian has an
efficient classical description as a matrix product state.  By
repeatedly updating the matrix product state description, one can
then keep track of the adiabatic evolution with resources polynomial
in the path length in parameter space and the inverse error of the
approximation.  For \rDHam{d}{r} Hamiltonians with constant spectral
gap and $d \geq 2$, the question remains open in general, although
Osborne has proven~\cite{osborne} that such systems can be
efficiently classically simulated for logarithmic time.

For the \QMA-completeness problem, we know very little about the
possible values of the spectral gap.  If the spectral gap is
constant relative to the energy per term (or even $\Omega(1/\ln\ln
nR)$), the ground state has a matrix product state
representation~\cite{hastings} and the problem is in \NP. In our
construction, there are enough low-energy states that the spectral
gap cannot be much larger than the promise gap
$\Delta$, but it might be much smaller.  In the ``no''
instances, there are likely very many states which violate only a
small number of transition rules, penalty terms, or initial
conditions, and these have energy just above $E + \Delta$, so likely
the spectral gap is exponentially small in the ``no'' instances.  We
know less about the size of spectral gap in the ``yes'' instances.
States which do not
correspond to valid histories have an energy at least $\Delta$
larger than the ground state, but unfortunately, there may be
different valid histories with energies less than $E+\Delta$ but
above the ground state energy.  The difficulty is that the original
problem might have many potential witnesses. Some may be good
witnesses, accepted with high probability, whereas others may be
mediocre witnesses, accepted with a probability near $1/2$. There
could, in fact, be a full spectrum of witnesses with only
exponentially small gaps between their acceptance probabilities.  In
order to show that the ``yes'' instances can be taken to have a
spectral gap which is at least inverse polynomial in the system size,
we would need a quantum version of
the Valiant-Vazirani theorem \cite{VV:86}, which would say that we
can always modify a \QMA\ problem to have a unique witness accepted
with high probability.

\section{Acknowledgements}
This work was partly done while three of the authors (D.~A., D.~G.,
and J.~K.) were visiting the Institute Henri Poincar\'e in Paris,
and we want to thank the IHP for its hospitality.  We wish to thank
Daniel Fisher, Matt Hastings, Lev Ioffe, Tobias Osborne, Oded Regev,
and Barbara Terhal for helpful discussions and comments and Oded for
help with the design of the state icons.


\newcommand{\etalchar}[1]{$^{#1}$}

\end{document}